\theoremstyle{plain}
\newtheorem{theorem}{Theorem}[section]
\newtheorem{proposition}[theorem]{Proposition}
\newtheorem{corollary}[theorem]{Corollary}
\theoremstyle{definition}
\newcommand{\g}{\mathfrak{g}}
\def\so{\mathfrak{so}}
\def\g{\mathfrak{g}}
\def\D{\mathcal{D}}
\def\R{\mathbb{R}}
\def\M{\mathcal{M}}
\def\A{\mathcal{A}}
\def\Ham{\mathcal{H}}
\def\C{\mathcal{C}}
\newcommand{\caps}[1]{\textup{\textsc{#1}}}
\providecommand{\bysame}{\makebox[3em]{\hrulefill}\thinspace}
\newcommand{\up}{\upshape}
\newcommand{\longto}{\longrightarrow}
\newcommand{\hookto}{\hookrightarrow}
\newcommand{\toto}{\twoheadrightarrow}
\def\vv<#1>{\langle#1\rangle}
\newcommand{\tr}{\mbox{$\textup{Tr}$}}
\newcommand{\id}{\mbox{$\text{\up{id}}\,$}}
\providecommand{\det}{\mbox{$\text{\up{det}}\,$}}
\newcommand{\dd}[2]{\mbox{$\frac{\partial #2}{\partial #1}$}}
\newcommand{\om}{\omega}
\newcommand{\Om}{\Omega}
\newcommand{\var}{\varphi}
\newcommand{\lam}{\lambda}
\newcommand{\Lam}{\Lambda}
\newcommand{\wt}[1]{\mbox{$\widetilde{#1}$}}
\newcommand{\by}[2]{\mbox{$\frac{#1}{#2}$}}
\providecommand{\set}[1]{\mbox{$\{#1\}$}}
\newcommand{\X}{\mathfrak{X}}
\newcommand{\curv}{\mbox{$\textup{Curv}$}}
\newcommand{\ver}{\mbox{$\textup{Ver}$}}
\newcommand{\hor}{\mbox{$\textup{Hor}$}}
\newcommand{\momap}{momentum map\xspace}
\newcommand{\gu}{\mathfrak{g}}
\newcommand{\ho}{\mathfrak{h}}
\newcommand{\Ad}{\mbox{$\text{\upshape{Ad}}$}}
\newcommand{\orb}{\mbox{$\mathcal{O}$}}
\newcommand{\SO}{\mbox{$\textup{SO}$}}
\newcommand{\hlphi}{\mbox{$\textup{hl}^{\mathcal{A}}$}}
\newcommand{\hlphit}{\mbox{$\textup{hl}^{\widetilde{\mathcal{A}}}$}}
\newcommand{\Hamc}{\mbox{$\mathcal{H}_{\textup{c}}$}}
\newcommand{\Omnh}{\mbox{$\Om_{\textup{nh}}$}}
\newcommand{\Xnh}{\mbox{$X_{\textup{nh}}$}}
\title[$G$-Chaplygin systems, truncation and Chaplygin's ball]{$G$-Chaplygin systems
with internal symmetries,
truncation, and an (almost) symplectic view of Chaplygin's ball}
\author[Simon Hochgerner and Luis Garc{\'i}a-Naranjo]{}
\email{simon.hochgerner@epfl.ch}
\email{luis.garcianaranjo@epfl.ch}
\keywords{Chaplygin's ball, non-holonomic systems, Hamiltonization}
\subjclass{Primary: 70F25; Secondary: 53D20}
\begin{document}

\maketitle

\centerline{\scshape Simon Hochgerner and Luis Garc{\'i}a-Naranjo}
\medskip
{\footnotesize
 \centerline{Section de Mathematiques}
   \centerline{Station 8, EPFL}
   \centerline{CH-1015 Lausanne, Switzerland}
} 

%

\bigskip

 \centerline{(Communicated by Jair Koiller)}

\begin{abstract}
Via compression (\cite{Koiller1992,EhlersKoiller})
we write the $n$-dimensional Chaplygin sphere system as an
almost Hamiltonian system on $T^*\SO(n)$ with internal symmetry group
$\SO(n-1)$. We show how this symmetry group can be factored out, and
pass to the fully reduced system on (a fiber bundle over)
$T^*S^{n-1}$.
This approach yields
an explicit description of the reduced system in terms of the
geometric data involved. Due to this description we can study
Hamiltonizability of the system. It turns out that the homogeneous
Chaplygin ball, which is not Hamiltonian at the $T^*\SO(n)$-level,
is Hamiltonian at the $T^*S^{n-1}$-level. Moreover, the
$3$-dimensional
ball becomes Hamiltonian at the $T^*S^{2}$-level after
time reparametrization, whereby we re-prove a result of
\cite{BorisovMamaev,BorisovMamaev2005} in symplecto-geometric terms.
We also study compression followed by reduction of generalized
Chaplygin systems.
\end{abstract}

\tableofcontents

\section{Introduction and description of results}

A non-holonomic system (with linear constraints) consists of a configuration manifold
$Q$, a Lagrangian $L: TQ\to\R$, and a non-integrable
smooth distribution $\D\subset TQ$.
The equations of motion for a curve $q(t)$ in $Q$ are determined by the
Lagrange-d'Alembert principle (constraining force does not exert work)
supplemented by the condition that
$q'\in\D$.
We shall only deal with
constraint distributions $\D$ that are of constant rank.
Further, $L$ will be of the form `kinetic energy
minus potential' where the kinetic energy defines a
Riemannian metric $\mu$  on the configuration manifold.

A $G$-Chaplygin system is a non-holonomic system $(Q,\D,L)$ which is
invariant under a free and proper
action by a Lie group $G$ on $Q$ such that $\D$
defines a connection on the principal bundle $Q\toto Q/G$. It is not
required that $\D$ is the mechanical connection associated to
$\mu$. Under these assumptions the equations of motion can be written
in a particularly nice format. Let $S=Q/G$ be the reduced
configuration space, $\Om^S$ the canonical
symplectic form on $T^*S$, $J: T^*Q\to\gu^*$ the standard \momap  of the
lifted $G$-action on $T^*Q$, and $K\in\Om^2(Q,\gu)$ be the curvature
form associated to $\D$. Then the non-holonomic system can be reduced,
or \emph{compressed}, to a dynamical system on $T^*S$ with dynamics
given by the vector field $\Xnh$ which is defined by
\begin{equation}\label{e:comp-syst}
 i(\Xnh)\Omnh = d\Hamc
 \textup{ where }
 \Omnh := \Om^S-\vv<J,K>.
\end{equation}
Here $\Hamc: T^*S\to\R$ is the \emph{compressed Hamiltonian}; it is
the function induced from the Legendre transform of $L$.
The term $\vv<J,K>$ does make sense as a semi-basic two-form on $T^*S$
since ambiguities cancel out. The form $\Omnh$ is, in general, an almost symplectic
form, that is, it is non-degenerate and non-closed. We will thus view
the compressed system $(T^*S,\Omnh,\Hamc)$ as an \emph{almost
Hamiltonian} system.
See
\cite{Koiller1992,EhlersKoiller,BS93} and Section~\ref{S:ham-setting}.
(Our sign in \eqref{e:comp-syst} is different from that in \cite{EhlersKoiller}
because of our choice of sign in $\Om^S=-d\theta$: \cite{EhlersKoiller} choose
$\Om^S=d\theta$ whence for them $\Omnh = \Om^S+\vv<J,K>$.)

The present paper is only concerned with non-holonomic systems that
arise as $G$-Chaplygin systems, and the description of the dynamics
in terms of the above mentioned compression process will be our starting point.
The question arises whether the compressed system $(T^*S,\Omnh,\Hamc)$
is
Hamiltonizable: is there a positive function $f: S\to\R$ such that
$f\Omnh$ is closed? If this is the case one says that $\Omnh$ is
conformally symplectic. The interpretation is that one is looking for
an $s\in S$ dependent time reparametrization $d\tau = fdt$ so that the
system becomes Hamiltonian in the new time.
That is, the dynamics described by the vector field
$\by{1}{f}\Xnh(\wt{c}(\tau)) = \dd{\tau}{t}\dd{t}{c} =
\dd{\tau}{}\wt{c}(\tau)$, where $\wt{c}(\tau)=c(t)$, are Hamiltonian
in the usual sense with respect to $f\Omnh$.
Moreover, it follows that the volume form $f^{m-1}\Omnh^m$ ($m=\dim
S$) is preserved by
the flow of $\Xnh$. Conversely, when a preserved volume form $F\Omnh^m$
exists then $F^{\frac{1}{m-1}}$ is a candidate for a conformal
factor.
See the
discussion in Ehlers, Koiller, Montgomery and Rios~\cite{EhlersKoiller}.

The classical Chaplygin sphere problem (\cite{Chaplygin1987a})
is that of a dynamically balanced $3$-dimensional ball
that rolls on a horizontal table without slipping.
Dynamically balanced means that the geometric center
coincides with the center of mass. However, we do not suppose that the
mass distribution is homogeneous. The inertia matrix can be any symmetric
positive definite three by three matrix.
The no slip condition is
a non-holonomic constraint on the velocities. The ball is
allowed to rotate about its vertical axis.
The reduced equations were first found and integrated by
Chaplygin~\cite{Chaplygin1987a} in terms of hyper-elliptic
functions. A thorough study of the algebraic integrability is given in
Duistermaat~\cite{D04} where it is explicitly stated that the system is
not Hamiltonian.

Chaplygin's rolling ball is a $G$-Chaplygin system with configuration
space $Q=\SO(3)\times\R^2$, constraint distribution
$\D$, kinetic energy Lagrangian, and symmetry group $G=\R^2$. Thus
$\D$ defines a horizontal connection on $Q\toto S=\SO(3)$. See
Section~\ref{S:Chaplygin} for details.
In \cite{EhlersKoiller} the compression of this system to an almost
Hamiltonian system $(T^*S,\Omnh,\Hamc)$ is carried out.
Further,
\cite{EhlersKoiller} prove that this compressed system is not
Hamiltonizable (at the $T^*\SO(3)$-level), not even in the homogeneous
case. On the other hand,
Borisov and Mamaev~\cite{BorisovMamaev,BorisovMamaev2005} give
explicit formulas for a Poisson bracket
which allow to
write the (reduced) equations of motion for Chaplygin's ball as a true
Hamiltonian system.
(Their bracket is explained in geometric terms involving
\emph{affine almost Poisson structures} in \cite{N08}.)
Their result is all the more remarkable as it is in apparent
contradiction to the assertions of \cite{D04,EhlersKoiller}.

It actually seems to be a general phenomenon that integrable
non-holonomic systems are related to integrable Hamiltonian
systems. See also \cite{FJ04,Jov08,Jov09}. This  observation provides
an important motivation for a systematic study of Hamiltonization of
integrable non-holonomic systems.

Chaplygin's
rolling ball is the topic of Section~\ref{S:Chaplygin}.
We describe its compression $(T^*S,\Omnh,\Hamc)$ in detail,
write the system in the form
(\ref{e:comp-syst}),
and pay particular attention to
the fact that there remains a further symmetry group even after
compression. Indeed, rotation of the ball about its vertical axis
induces an $S^1$-action on the compressed phase space $T^*S$ that
preserves $\Hamc$ and $\Omnh$.
In emphasizing the role of the
$\vv<J,K>$-term and the
almost Hamiltonian point of view we follow very closely the
exposition of \cite{EhlersKoiller}.

The main theme of the present paper is to establish a synthesis between
the papers of \cite{BorisovMamaev,BorisovMamaev2005} and \cite{EhlersKoiller}.
The crucial idea (actually due to \cite{EhlersKoiller})
which is used in this note is that the compressed system should be
further reduced with respect to the induced $S^1$-action, and
Hamiltonization should be attempted afterwards on the ultimate reduced
space $T^*S^2 = T^*(S/S^1)$.
(This is in agreement with \cite{BorisovMamaev,BorisovMamaev2005}
since the symplectic leaves of their Poisson bracket can be realized
as magnetic cotangent bundles over $S^2$.)
The $S^1$-symmetries are generally referred to as internal symmetries of
the system. Describing the corresponding reduction procedure is
non-trivial and is the main
result of the paper. (See Theorems~\ref{thm:trunc} and \ref{prop:trunc}.)

This problem can
be stated also for higher dimensional Chaplygin balls. Let $S=\SO(n)$
be the shape space of the $n$-dimensional Chaplygin ball rolling on an
$n-1$-dimensional horizontal plane with internal symmetry group
$H=\SO(n-1)$. Internal symmetries are very well behaved in that they
give rise to conserved quantities: the standard \momap $J_H:
T^*S\to\ho^*$ with respect to the canonical form $\Om^S$ is constant
along flow lines of $\Xnh$. However, $J_H$ is \emph{not} the \momap
with respect to $\Omnh$. That is, for $\lam\in\ho^*$,
the restriction of $\Omnh$ to
$J_H^{-1}(\lam)$ does not define a basic two form on the bundle
$J_H^{-1}(\lam)\toto J_H^{-1}(\lam)/H_{\lam}$ whence the system does
not descend to the `would be' ultimate reduced space. This is true
already for $n=3$. Now the point of Theorem~\ref{prop:trunc} is that
$\Omnh$ can be \emph{truncated} in a way that does not affect the equations
of motion but does provide the correct \momap. Effectively we replace
$\Omnh$ by a new two form $\wt{\Om}$ that is non-degenerate,
$H$-invariant, and satisfies
\[
 i(\Xnh)\wt{\Om} = d\Hamc
 \textup{ as well as }
 i(\zeta_Y)\wt{\Om} = \vv<dJ_H,Y>
\]
for all $Y\in\ho$ where $\zeta_Y$ denotes the infinitesimal generator
associated  to $Y$.
Why the name truncation? To construct $\wt{\Om}$ we use an $H$-connection
on the principal bundle $T^*S\toto(T^*S)/H$ such that $\Xnh$ is horizontal.
We employ this connection to truncate the $\vv<J,K>$-term in such a way that
it becomes horizontal with respect to the $H$-action and we retain only the
necessary information about the dynamics.
In particular, Theorem~\ref{prop:trunc} gives an
explicit formula
\[
 \wt{\Om} = \Om^S - \vv<L,\curv^{\om}>
\]
where $\curv^{\om}\in\Om^2(S,\ho)$ is the curvature form associated to
the Hopf connection on $\SO(n)\toto \SO(n)/H = S^{n-1}$ and $L:
T^*S\to\ho^*$ is a certain mapping (related to angular velocity in the
space frame)
that coincides with $J_H$ if and
only if the ball is homogeneous.
Notice also that $\wt{\Om}$ is of the same format `canonical form minus
semi-basic' as $\Omnh$.
Now, one can carry out almost
Hamiltonian reduction (\cite{PlanasBielsa2004}) of
$(T^*S,\wt{\Om},\Hamc)$ with respect to the
$H$-action.

It follows immediately, for any dimension $n$, that the
ultimate reduced system on $T^*S^{n-1}$
(or rather on a fiber bundle $J_H^{-1}(\lam)/H_{\lam}\to T^*S^{n-1}$ -- see
Corollary~\ref{cor:4.2})
is Hamiltonian when the ball is
homogeneous.

For the non-homogeneous case, thanks to the
formula for $\wt{\Om}$ we can reprove the result of
\cite{BorisovMamaev} on Hamiltonization of the $3$-dimensional ball
in relatively simple geometric terms.
See Proposition~\ref{prop:ham}.
Hamiltonization of Chaplygin's ball for higher dimensions is still an
open problem.
It is, however, hoped that Theorem~\ref{prop:trunc}
can be of some help in this direction.
(After this paper was finished, important progress was made by
\cite{Jov09}.)

The proof of Hamiltonization of the 3-dimensional ball given in
\cite{BorisovMamaev2005}
relies on Chaplygin's reducing multiplier theorem.
This theorem
applies only to a certain kind of almost Hamiltonian systems with two degrees
of freedom, and states
that existence of a preserved measure is equivalent to
existence of a conformal factor.
An alternative method that
has been used to prove Hamiltonization of higher dimensional
non-holonomic systems is to explicitly establish an isomorphism with
a classical Hamiltonian system \cite{FJ04,Jov08,Jov09}. Our
approach  is valuable in that it is purely geometric, it does not have
an a-priori dimension restriction, and it ties together the work of \cite{EhlersKoiller}
and \cite{BorisovMamaev,BorisovMamaev2005}.

In Section~\ref{S:InternalSym} we study general $G$-Chaplygin systems
with internal symmetries. In this context we describe a reduction
procedure that is similar to reduction in stages in symplectic
geometry.
Section~\ref{S:Chaplygin} is used as a
motivation for doing so but can be read independently since all the
results are proved directly.
The set-up in this context is a generalization of the Chaplygin ball
described above. Thus $\pi: Q\to S$ is a $G$-principal fiber bundle with
connection one-form $\mathcal{A}$ and $\mu$ is an invariant metric.
The internal symmetries are modeled
by \emph{two} additional free and proper actions, called $l$ and $d$,
of the \emph{same} Lie group $H$ on $Q$ satisfying appropriate
compatibility conditions with regard to the connection $\A$ and the
metric $\mu$ and the projection $Q\to S$. The compression of the data
$(Q,\D=\A^{-1}(0),L=\by{1}{2}||\cdot ||_{\mu})$ and the induced
$H$-action on $T^*S$ are described.
From the non-holonomic Noether theorem it is concluded
that the standard \momap
$J_H:T^*S\to\ho^*$ with respect to the canonical symplectic form on
$T^*S$ is constant along flow lines of $\Xnh$. But $J_H$ need not
be the \momap associated to $\Omnh$. However,
we can replace $\Omnh$ with a non-degenerate and $H$-invariant two
form $\wt{\Om}$ which not only gives the correct dynamics,
$i(\Xnh)\wt{\Om}=i(\Xnh)\Omnh$, but also the desired \momap $J_H$,
$i(\zeta_Y)\wt{\Om} = \vv<dJ_H,Y>$ for all $Y\in\ho$.
This is accomplished via truncation with respect to a
choice of an auxiliary connection
$\sigma\in\Om^1(T^*S,\ho)$ on the principal bundle
$T^*S\toto(T^*S)/H$.
Again $\sigma$ is subject to the condition
that $\Xnh$ be horizontal. Such a $\sigma$ is shown to always exist over an
open sub-manifold of $T^*S$ which is invariant under the $H$-action and
the dynamics of $\Xnh$. This process of compression being followed by
reduction of internal symmetries has very much the flavor of
reduction in stages. Indeed, when $\A$ is the
mechanical connection associated to $\mu$ one can replace compression
followed by reduction by usual reduction in stages.

\section{The almost Hamiltonian setting and compression}\label{S:ham-setting}

A non-holonomic system is a triple $(Q,\D,L)$
where $Q$ is a configuration manifold,
$L: TQ\to\R$
is  a Lagrangian, and  $\D\subset TQ$ is
a smooth non-integrable distribution which is supposed to be
of constant rank. The equations of motion for a
curve $q(t)$ which should satisfy $q'\in\D$ are then stated
in terms of the Lagrange d'Alembert principle. We shall only
be concerned with Lagrangians of the form $L(q,v) =
\by{1}{2}\mu_q(v,v) - V(q)$ where $\mu$ is a Riemannian
metric on $Q$
and $V:Q\to\R$ is a potential.
In this
case there is also an (almost) Hamiltonian version (see
\cite{BS93,vanderschaft1994}, e.g.): continue to use the symbol $\mu$ to
denote the co-metric and consider the Hamiltonian $\Ham(q,p)
= \by{1}{2}\mu(p,p)+V(q)$.  Since $\D$ is of constant rank
there is a family of
independent one-forms $\phi^a\in\Om(Q)$ such that $\D$ is
the joint kernel of these.  In terms of coordinates
$(q^i,p_i)$ the equations of motion are
\[
 (q^i)' = \dd{p_i}{\mathcal{H}}
 \textup{ and }
 p_i' =
  -\dd{q^i}{\mathcal{H}} - \sum\lam_a\phi^a(\dd{q^i}{})
\]
where the $\lam_a$ are the Lagrange multipliers to be determined from
the supplementary condition that  $\mu(\phi^a,p)=0$.
With $X^{\mathcal{M}} := (q',p')$ we may
thus rephrase the equations as
\[
 i(X^{\mathcal{M}})\Om = d\Ham + \sum\lam_a\tau^*\phi^a
\] where $\Om=-d\theta$ is the canonical symplectic form on
$T^*Q$ and $\tau: T^*Q\to Q$ is the footpoint projection.
(The space
$\M\subset T^*Q$ which is a pseudonym for the distribution $\D$ will be
defined below.)

Roughly speaking, the process of writing the equations of
motion for a non-holonomic system in an almost Hamiltonian
way amounts to eliminating the Lagrange multipliers from the
equations of motion, and encoding the forces of constraint
in a bracket of functions (which fails the Jacobi identity) or a
(non-closed) two-form. Once this is
accomplished the constraints are satisfied
automatically. This process is developed below in terms of a
two-form.

\subsection{Chaplygin systems} \label{sub:G-chap}

Let $G$ be a Lie group that
acts freely, properly and by isometries on the Riemannian
manifold $(Q,\mu)$. A $G$-\caps{Chaplygin system} is a
non-holonomic system $(Q,L=\by{1}{2}||\cdot ||^{2}_{\mu},\D)$ that
has the property that $\D$ is a
principal connection on the principal bundle $Q\toto
Q/G$. Thus $\D$ is the kernel of a connection form $\A:
TQ\to\gu$. Notice that we do not require $\A$ to be the
mechanical connection associated to $\mu$. (In principle one
could also include a $G$-invariant function $V:Q\to\R$ but
we will not have use for this.)

We will now assume that $(Q,L,\D)$ is a
$G$-Chaplygin system and repeat some of the constructions
that are done in \cite{BS93}. In fact \cite{BS93} proceed in
greater generality. However, in the sequel we will only be
interested in Chaplygin systems whence the infinitesimal group orbit directions
form an exact complement to the distribution $\D$, and this
facilitates the development.

There is a sub-manifold
\[
 \M := \check{\mu}(\D)
\]
that corresponds to the constraint distribution, and the
inclusion will be denoted by $\iota: \M\hookto
T^*Q$. Clearly, $\M$ is invariant under the cotangent lifted
action by $G$, and there is an induced connection
$\iota^*\tau^*\A: T\M\to\gu$ on the principal bundle
$\M\toto \M/G$. Its horizontal space will be called
\[\C := (\iota^*\tau^*\A)^{-1}(0).
\] (This corresponds to the space $H$ in \cite{BS93}.)

\begin{theorem}[\cite{BS93}] The fiber-wise restriction of
$\iota^*\Om$ to $\C$ is non-degenerate.
\end{theorem}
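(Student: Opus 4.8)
The plan is to exhibit, for each $\alpha\in\M$, an explicit complement to $\C_\alpha$ inside $T_\alpha(T^*Q)$ on which $\iota^*\Om$ is complementary, so that non-degeneracy on $\C_\alpha$ is read off from non-degeneracy of the full canonical form $\Om$. More precisely, write $q=\tau(\alpha)$ and recall that $\D$ is a principal connection on $Q\toto Q/G$, so $T_qQ = \D_q\oplus(\gu\cdot q)$ where $\gu\cdot q$ is the tangent to the $G$-orbit. Lifting through the cotangent projection, I would decompose $T_\alpha(T^*Q)$ into the $G$-orbit directions $\gu\cdot\alpha$, the corresponding ``momentum'' (vertical cotangent) directions dual to $\gu$, and the horizontal space $\C_\alpha$; one checks these three pieces span and intersect trivially by a dimension count using that $\M$ has the structure of a principal $G$-bundle with connection $\iota^*\tau^*\A$.

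First I would set up notation: let $\gu\cdot\alpha$ denote the infinitesimal generators of the cotangent-lifted action at $\alpha$, and let $V^*_\alpha\subset T_\alpha(T^*Q)$ be a choice of complement to $\ker(T_\alpha\tau)\cap(\text{stuff})$ realizing the momentum directions — concretely, tangent vectors to curves $t\mapsto\alpha+t\beta$ with $\beta\in\ann(\D_q)=\check\mu^{-1}$-image... I would phrase this via the fact that $\M=\check\mu(\D)$ fibers over $Q$ with fibers $\check\mu(\D_q)$, so $T_\alpha\M$ splits $G$-equivariantly as $\C_\alpha\oplus(\gu\cdot\alpha)$, and $T_\alpha(T^*Q)=T_\alpha\M\oplus N_\alpha$ where $N_\alpha$ is a normal bundle of rank $\dim G$ which can be identified with $\gu^*$ (the ``constraint-force'' directions, i.e. the span of the $\tau^*\phi^a$ lifted appropriately). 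Then the key computation is that $\iota^*\Om$ restricted to $\C_\alpha$ has radical only in the $\gu\cdot\alpha$ and $N_\alpha$ directions: if $v\in\C_\alpha$ satisfies $\Om(v,w)=0$ for all $w\in\C_\alpha$, I want to show $\Om(v,w)$ can be made nonzero by some $w$, contradicting that $v$ lies in the radical of $\Om$ on all of $T_\alpha(T^*Q)$ unless $v=0$.

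The mechanism for that last step is the standard one: $\Om$ is the canonical symplectic form, so its ``polar'' pairing between the base-horizontal lift of $T_qQ$ and the fiber directions is the tautological duality pairing, which is perfect. Given $v\in\C_\alpha$ with projection $T_\alpha\tau(v)=:\bar v\in\D_q$ nonzero, I pair $v$ against a vertical vector $w_\beta$ (tangent to $t\mapsto\alpha+t\beta$) with $\beta\in T^*_qQ$ chosen so $\langle\beta,\bar v\rangle\ne0$; the subtlety is that $w_\beta$ must be arranged to lie in $\C_\alpha$, which forces $\beta$ to be valued in $\check\mu(\D_q)$ at the linearized level and to be killed by the connection $\iota^*\tau^*\A$ — but since $\check\mu(\D_q)$ is a linear subspace of $T^*_qQ$ that still separates points of $\D_q$ (because $\mu$ is a metric, hence $\check\mu|_{\D_q}:\D_q\to\check\mu(\D_q)$ is an isomorphism, so the pairing $\D_q\times\check\mu(\D_q)\to\R$, $(\bar v,\check\mu(u))\mapsto\mu(\bar v,u)$ is nondegenerate), such a $\beta$ exists. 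If instead $T_\alpha\tau(v)=0$, then $v$ is vertical in $T^*Q\to Q$, and being in $\C_\alpha$ forces it to be tangent to the linear fiber $\check\mu(\D_q)$; pairing with a horizontal lift of the corresponding $\bar v'\in\D_q$ again gives a nonzero value by the same nondegeneracy of $\mu|_{\D_q}$. I expect the main obstacle to be bookkeeping the three-way splitting $T_\alpha(T^*Q)=\C_\alpha\oplus(\gu\cdot\alpha)\oplus\gu^*$ carefully enough to be sure that the ``test vectors'' $w$ produced above genuinely lie in $\C_\alpha$ and not merely in $T_\alpha(T^*Q)$ — i.e. checking that the horizontal condition $\iota^*\tau^*\A(w)=0$ can be imposed simultaneously with $\langle\beta,\bar v\rangle\ne0$. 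This is exactly where the hypothesis that the orbit directions form an \emph{exact} complement to $\D$ (so that $\D$ is a genuine principal connection and $\iota^*\tau^*\A$ is a genuine connection form on $\M$) is used, and it makes the horizontal projection well-defined so the construction goes through.
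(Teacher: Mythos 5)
The paper itself offers no proof of this statement (it is quoted from Bates--Sniatycki \cite{BS93}), so your argument must stand on its own, and it does: the mathematical core is correct. That core is the two-case pairing computation at the end, and it is worth noting that it is already a complete proof by itself. Explicitly: for $0\neq v\in\C_\alpha$ with $\bar v:=T_\alpha\tau(v)\in\D_q$, either $\bar v\neq 0$, in which case the vector $w_\beta$ tangent to $t\mapsto\alpha+t\beta$ with $\beta=\check\mu(u)$, $u\in\D_q$, $\mu_q(u,\bar v)\neq 0$ lies in $\C_\alpha$ (the fiber of $\M$ over $q$ is the \emph{linear} subspace $\check\mu(\D_q)$, so the affine line stays in $\M$, and $w_\beta$ is killed by $T\tau$, hence by $\iota^*\tau^*\A$) and satisfies $\Om(v,w_\beta)=\vv<\beta,\bar v>=\mu_q(u,\bar v)\neq0$; or $\bar v=0$, in which case $v$ is tangent to the fiber, $v=\check\mu(u)$ with $0\neq u\in\D_q$, and any $w\in\C_\alpha$ with $T\tau(w)=u$ (which exists since $T\tau$ maps $\C_\alpha$ onto $\D_q$ by a dimension count) gives $\Om(v,w)=\pm\mu_q(u,u)\neq0$. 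In both cases the only non-degeneracy invoked is that of $\mu|_{\D_q}$, exactly as you say. Two remarks. First, the three-way splitting $T_\alpha(T^*Q)=\C_\alpha\oplus(\gu\cdot\alpha)\oplus N_\alpha$ whose bookkeeping you flag as the main obstacle is unnecessary scaffolding: the argument above never passes through the radical of $\Om$ on the ambient space, so you can delete that discussion (together with the garbled identification of the normal directions with $\ann(\D_q)$) and present the two pairings directly. Second, an alternative route is available \emph{a posteriori} from the paper's Compression proposition: $\Om^{\mathcal{C}}=\rho^*\big(\Om^S-\vv<J_G\circ\hlphi,\tau_S^*K>\big)$ with $T\rho|_{\C_\alpha}$ an isomorphism onto $T_{\rho(\alpha)}T^*S$, and the right-hand side is non-degenerate because a semi-basic perturbation of the canonical cotangent symplectic form always is; but since that proposition comes after this theorem, your direct fiber-pairing argument is the appropriate one to give here.
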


Let us denote this restriction by $\Om^{\mathcal{C}}$. For
the simple reason that $\C$ is not the tangent space of any
manifold one cannot say that $\Om^{\mathcal{C}}$ is a
two-form. Nevertheless, morally it is this restriction
process that destroys the closedness property of
$\iota^*\Om$.  Since $X^{\mathcal{M}}$ is tangent to $\M$
and takes values in $\C$ one may thus rewrite the equations
of motion in the appealing format
\[
 i(X^{\mathcal{M}})\Om^{\mathcal{C}}
 =
 (d\Ham)^{\mathcal{C}}
\]
where $(d\Ham)^{\mathcal{C}}$ is the restriction of
$d\Ham$ to $\C$.

\subsection{Compression of $G$-Chaplygin systems}\label{S:CompressionIntrinsic}

In this section we review the compression of $G$-Chaplygin
systems from the Hamiltonian perspective. This will also
allow us to introduce some additional notation.
The original references are \cite{Koiller1992, BS93,
Koiller2002}. We shall follow \cite{Koiller2002} and use the
word compression instead of non-holonomic reduction.

Consider a $G$-Chaplygin system on a configuration manifold
$Q$ with constraint distribution $\D = \ker\A$ as defined
above.
Recall that $Q$ is endowed with the kinetic energy metric
$\mu$.  Let $\mu_0$ denote the induced metric on $S$ that
makes $\pi$ a Riemannian submersion. (To facilitate the
notation, we will sometimes tacitly identify tangent and
cotangent space of $Q$ and $S$ via their respective
metrics.)
Consider the orbit projection map
\[
 \rho: \M\toto \M/G.
\]
Using the respective metrics we can write $\rho$ as the
composition
\begin{equation}
\rho: \M\cong_{\mu}\D\overset{T\pi|\mathcal{D}}{\longto}
   TS\cong_{\mu_{0}}T^*S = \M/G.
\end{equation}
We may also associate a fiber-wise inverse to this
mapping which is given by the horizontal lift mapping
$\hlphi$ associated to $\A$.  (This inverse was called the
clock-wise diagram in \cite[Section~3.1]{EhlersKoiller}.)
As already noted above, $\wt{\A} := \iota^*\tau^*\A:
T\mathcal{M}\to \g$ defines a principal bundle connection
for $\rho$, whose horizontal spaces are given by $\C$.  (The
connection $\wt{\A}$ is the same as the one obtained in
\cite{EhlersKoiller} by differentiating the clock-wise
diagram.)

\begin{proposition}[Compression]
\begin{enumerate}[\up (1)]
The following are true.
\item
$\Om^{\mathcal{C}}$ descends to a
non-degenerate two-form $\Omnh$ on $T^*S$.
\item
$\Omnh = \Om_S - \vv<J_G\circ\textup{hl}^{\A},\tau_S^*K>$.
Here
$\Om_S=-d\theta_S$ is the canonical form on $T^*S$, $J_G$ is
the \momap of the cotangent lifted $G$-action on $T^*Q$,
$K\in\Om^2(S,\g)$ is the curvature form of $\A$, and
$\tau_S: T^*S\to S$ is the projection.
\item
The vector field $X^{\mathcal{M}}$ is $\rho$-related
to the vector field $\Xnh$ on $T^*S$ defined by
$i(X_{\textup{nh}})\Omnh = d\Hamc$ where the compressed
Hamiltonian, $\Hamc: T^*S = TS \to \R$ is defined by $\Hamc
:= \Ham\circ\textup{hl}^{\A}$, with $\textup{hl}^{\A}$
denoting the horizontal lift mapping.
\end{enumerate}
\end{proposition}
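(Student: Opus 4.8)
The plan is to prove the three assertions in turn. Statement (1) is a formal consequence of $G$-invariance together with the theorem of \cite{BS93} recalled above; almost all of the content is in (2); and (3) then follows from (1)--(2) and the description $i(X^{\mathcal{M}})\Om^{\mathcal{C}}=(d\Ham)^{\mathcal{C}}$ obtained in Section~\ref{sub:G-chap}. For (1): the canonical form $\Om$ on $T^*Q$ is $G$-invariant and $\iota$ is $G$-equivariant, so $\iota^*\Om$ is $G$-invariant; and $\C=(\wt{\A})^{-1}(0)$, being the horizontal bundle of the principal connection $\wt{\A}$, is $G$-invariant as well. Hence $\Om^{\mathcal{C}}$ is a $G$-invariant section of $\Lambda^2\C^*$. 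Since $T\rho$ restricts to a fiber-wise isomorphism $\C\to\rho^*T(T^*S)$, such a section is the same datum as a genuine two-form $\Omnh$ on $T^*S$, characterized by $\rho^*\Omnh|_{\C}=\Om^{\mathcal{C}}$; smoothness is immediate, and non-degeneracy of $\Omnh$ is precisely what the quoted theorem of \cite{BS93} asserts.

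For (2) the central step is the identity of one-forms on $\M$
\[
 \iota^*\theta_Q=\rho^*\theta_S+\vv<J_G\circ\iota,\wt{\A}>,
\]
where $\theta_Q,\theta_S$ are the tautological one-forms. I would check this pointwise: writing $p=\check{\mu}(v)$ with $v\in\D$, and $w\in T_qQ$ for the image under $T(\tau\circ\iota)$ of a tangent vector of $\M$ at $p$, the left side is $\vv<p,w>$; decomposing $w$ into its $\A$-horizontal part and its $\A$-vertical part $\zeta_{\A(w)}(q)$, the $\wt{\A}$-summand, which equals $\vv<J_G(\iota(p)),\A(w)>=\vv<p,\zeta_{\A(w)}(q)>$, accounts for the vertical piece, while the pairing of $p$ with the horizontal part gives $\rho^*\theta_S$ by the definitions of $\rho$ and of the induced metric $\mu_0$ on $S$. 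Now apply $-d$ (recall $\Om=-d\theta_Q$, $\Om_S=-d\theta_S$): the exterior derivative of $\vv<J_G\circ\iota,\wt{\A}>$ splits into a term pairing $d(J_G\circ\iota)$ against $\wt{\A}$ and the term $\vv<J_G\circ\iota,d\wt{\A}>$. Restricting everything to $\C\times\C$, the first of these vanishes because $\wt{\A}|_{\C}=0$; in the second I substitute the structure equation for $\wt{\A}$, whose curvature is the pullback $(\tau\circ\iota)^*$ of the curvature of $\A$ and hence equals $\rho^*\tau_S^*K$ (using $\pi\circ\tau\circ\iota=\tau_S\circ\rho$ and that the curvature of $\A$ descends to $K$), the $[\wt{\A},\wt{\A}]$-contribution vanishing on $\C$ as well. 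Finally $\vv<J_G\circ\iota,\rho^*\tau_S^*K>$ is horizontal and $G$-invariant (the coadjoint equivariance of $J_G$ cancelling the adjoint equivariance of the curvature), hence descends to $\rho^*\vv<J_G\circ\textup{hl}^{\A},\tau_S^*K>$ --- which is the very reason the semi-basic term $\vv<J_G,K>$ is well defined on $T^*S$. Comparing with $\rho^*\Omnh|_{\C}=\Om^{\mathcal{C}}=\iota^*\Om|_{\C}$ gives $\Omnh=\Om_S-\vv<J_G\circ\textup{hl}^{\A},\tau_S^*K>$.

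For (3): the vector field $X^{\mathcal{M}}$ is $G$-invariant (the whole system $(Q,\D,L)$ is), takes values in $\C$, and is tangent to $\M$, hence is $\rho$-related to a well-defined vector field $\Xnh$ on $T^*S$. Pushing $i(X^{\mathcal{M}})\Om^{\mathcal{C}}=(d\Ham)^{\mathcal{C}}$ down along $\rho$ --- using $\rho^*\Omnh|_{\C}=\Om^{\mathcal{C}}$ from (1)--(2), the relation $\Ham\circ\iota=\Hamc\circ\rho$ (a consequence of $\textup{hl}^{\A}\circ\rho=\iota$), and the isomorphism $T\rho|_{\C}\colon\C\to T(T^*S)$ --- yields $i(\Xnh)\Omnh=d\Hamc$, and non-degeneracy of $\Omnh$ shows this characterizes $\Xnh$.

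The step I expect to be the main obstacle is the computation in (2), and precisely because $\A$ is not assumed to be the mechanical connection: then $J_G\circ\iota$ does not vanish on $\M$, so the curvature correction genuinely survives, and one must carry the connection and metric identifications through $-d$ without ever confusing $\A$ with $\wt{\A}$. The one genuinely delicate point is checking that the surviving two-form descends to $T^*S$, that is, does not depend on the horizontal lift used to define $J_G\circ\textup{hl}^{\A}$; this is where the equivariance of $J_G$ must be played off against the equivariance of the curvature form.
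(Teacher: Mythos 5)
Your proof is correct, and it rests on the same underlying decomposition as the paper's: both arguments hinge on splitting the tautological form on $\M$ into a basic piece $\rho^*\theta^S$ and the momentum piece $\vv<J_G\circ\iota,\wt{\A}>$ (the paper phrases this as the claim $\theta^Q\circ\psi=\rho^*\theta^S$, which is equivalent to your one-form identity via $\psi+\zeta\circ\wt{\A}=\id_{T\M}$ and the definition of $J_G$). Where you genuinely diverge is in how the exterior derivative is computed. The paper works in a local trivialization $Q=S\times G$ with projectable $\C$-valued vector fields and extracts the curvature from the vertical part of the Lie bracket, $\theta^Q(\zeta\circ\wt{\A})[X_1,X_2]=\vv<J_G,-\wt{K}(X_1,X_2)>$; you instead differentiate the global identity and invoke the structure equation $d\wt{\A}|_{\C}=\wt{K}|_{\C}$ together with naturality of curvature under the equivariant bundle map $\tau\circ\iota$, so that $\wt{K}=\rho^*\tau_S^*K$ drops out without any coordinates. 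Your route is cleaner in that it makes the descent of the $\vv<J,K>$-term (coadjoint equivariance of $J_G$ cancelling adjoint equivariance of $K$) an explicit, isolated step, and it correctly identifies the one place where care is needed: the pointwise verification of $\iota^*\theta^Q=\rho^*\theta^S+\vv<J_G\circ\iota,\wt{\A}>$ uses that $\mu_0$ is the metric induced by $\A$-horizontal lifts (so that pairing $p$ with the $\A$-horizontal part of $w$ gives $(\mu_0)_s(T\pi v,T\pi w)$) --- which is the convention the paper's own computation employs, and without which the argument would only work for the mechanical connection.
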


This result is well-known. It is contained in
\cite{BS93,Koiller1992,Koiller2002,EhlersKoiller}. Nevertheless
we include the following proof of the `$\vv<J,K>$-formula'
because it is slightly different in its flavor from those
that can be found in the literature. Note that our sign
in the $\vv<J,K>$-formula differs from that in
\cite{EhlersKoiller} since we are using a different
convention for the canonical exact symplectic form.
(For sake of brevity we will sometimes
write $\vv<J_G,K>$ instead of
$\vv<J_G\circ\hlphi,\tau_S^*K>$.)

\begin{proof}
We need to show that $\Om^{\mathcal{C}} =
\rho^*(\Om^S-\vv<J_G\circ\hlphi,\tau_S^*\;K>)$ on $\mathcal{C}$.
Work
locally, i.e., assume that $Q=S\times G$ is a direct product
and that $TS = S\times U$ is trivializable. Via right
trivialization we shall also identify $TG=G\times\gu$.  Thus
the connection is given by
\begin{equation}
\mathcal{A}: S\times U \times G\times\gu\longto \gu,
     (s,u,g,v)\longmapsto v+\mathcal{A}_{(s,g)}(u).
\end{equation} The horizontal space is
$\M=\D=\set{(s,u,g,-\mathcal{A}_{(s,g)}(u))}$. Let $X_1,X_2$
be vector-fields on $\M$ with values in $\C =
\wt{\mathcal{A}}^{-1}(0)$ that project to vector-fields
$\bar{X_1},\bar{X_2}\in\X(T^*S)$. Since $X_i\in \C$ we have
$X_i = (s_i',u_i',-\mathcal{A}(s_i'),v_i')$ for
$i=1,2$. Note also that $\bar{X_i} = (s_i',u_i')$ and
$T(\tau\circ\iota).X_i = (s_i',-\mathcal{A}(s_i'))\in
T(S\times G)$.

Let $\psi: T\M\to\C$ denote the horizontal projection
associated to $\wt{\A}$.

Claim: $\theta^Q\circ\psi = \rho^*\theta^S$ where
$\theta^Q$, $\theta^S$ denote the respective canonical
one-forms.  Indeed, for $(q,p)\in\M$ and $\rho(q,p) =
\rho(s,g,u,-\mathcal{A}_{(s,g)}(u)) = (s,u)$ we find
\[\theta^Q X_2(q,p) =
\mu_q\big((u,-\mathcal{A}(u)),(s_2',-\mathcal{A}(s_2'))\big)
= (\mu_0)_s(u,s_2') = \theta^S\bar{X_2}(s,u)
\] which shows the identity for all vector-fields where it is
non-trivial.  Therefore,
\[X_1.\theta^Q X_2 = d((\theta^S \bar{X_2})\circ\rho).X_1 =
(d(\theta^S \bar{X_2}).d\rho(X_1))\circ\rho =
(\bar{X_1}.\theta^S \bar{X_2})\circ\rho.
\] Since
$\psi+\zeta\circ\wt{\mathcal{A}}=\id_{T\mathcal{M}}$ where
$\zeta$ is the fundamental vector-field mapping of the
$G$-action on $\M$, it follows that
\begin{align*}
  \Om^{\mathcal{M}}(X_1,X_2)
  &=
  -X_1.\theta^Q X_2 + X_2.\theta^Q X_1
    + \theta^Q\psi[X_1,X_2]
    + \theta^Q(\zeta\circ\wt{\mathcal{A}})[X_1,X_2]\\
  &=
  -\rho^*(\bar{X_1}.\theta^S\bar{X_2}
    - \bar{X_2}.\theta^S\bar{X_1})
    + \rho^*(\theta^S[\bar{X_1},\bar{X_2}])
    - \vv<J_G,\tau^*K>(\bar{X_1},\bar{X_2})\\
  &=
  -\rho^*(d\theta^S)(X_1,X_2)
  - \rho^*\vv<J_G\circ\hlphi,\tau_S^*K>(X_1,X_2).
\end{align*}
For the middle equation we used the following identity. Let
$\wt{K}\in\Om^2(\mathcal{M},\gu)$ denote the curvature form
associated to $\wt{\mathcal{A}}$. Then
$\theta^Q(\zeta\circ\wt{\mathcal{A}})[X_1,X_2]
 =
 \vv<J_G,\wt{\mathcal{A}}[X_1,X_2]>
 =
 \vv<J_G,-\wt{K}(X_1,X_2)>
 =
 \vv<J_G,-K\circ\Lam^2 T\tau(\bar{X_1},\bar{X_2})>
 $.
\end{proof}

We collect the  compressed data to a triple
$(T^*S,\Omnh,\Hamc)$ and refer to it as the compressed
system. Let us also note explicitly that the equations of
motion are now given by the almost Hamiltonian form
\begin{equation}
 i(\Xnh)\Omnh = d\Hamc
\end{equation}
whence the constraints have been successfully encoded in the
two-form structure.
In general, $\Omnh$ is an almost symplectic form, that is, it is
non-degenerate and non-closed. Thus we refer to the compressed system
$(T^*S,\Omnh,\Hamc)$ as an almost Hamiltonian system. However, there
are non-integrable distributions which do give rise to forms $\Omnh$
that are closed. This is simply so because compression is a
generalization of symplectic reduction at the $0$-level of simple
mechanical systems. Consider, e.g., the homogeneous Veselova
system of \cite{Ves88}. For this system the configuration space $Q$ is $\SO(3)$, the
group $G$ is $S^1$, and the distribution $\D$ is the horizontal space
of the mechanical connection associated to the standard biinvariant metric
on $\SO(3)$.
(The constraints are conserved quantities of the unconstrained system.)
Thus compression and symplectic reduction at
$0\in\gu^*=\R$ agree.

\section{Reduction of internal symmetries via truncation}
\label{S:InternalSym}

We continue notation and assumptions from
Section~\ref{S:CompressionIntrinsic}.  Thus $\pi: Q\to S$ is
a $G$-principal fiber bundle with connection form
$\mathcal{A}$.  Additionally we assume that there is a Lie
group $H$ which acts on $S$, through a linear representation
on $\gu$, and by two different actions, $l$ and $d$, on
$Q$. More precisely we require that
\begin{itemize}
\item
$\pi: Q\to S$ is $l$- and $d$-equivariant;
\item
$\A: TQ\to\gu$ is $d$-equivariant;
\item
$l$ acts by internal symmetries, that is $\A.\zeta^l_Y
= 0$ for all $Y\in\ho$.
\end{itemize}
The metric $\mu$ on $Q$ is now supposed to be
$l$-, $d$-, and $G$-invariant.
This is the abstraction of the situation encountered in
Section~\ref{S:Chaplygin}.

In non-holonomic mechanics the relationship between symmetries and
conserved quantities is not obvious. (See \cite{BKMM}.)
While the momentum map for an external symmetry group (the
$G$- and $d$-actions) is generally not constant during the
motion, the momentum map associated to a internal symmetry
(the $l$-action) is. This is the non-holonomic version of
Noether's theorem which we state for further reference in
the following theorem that can be found in \cite{ArnoldIII}.

\begin{theorem}\label{T:NoetherNonho} Let $H$ be an internal
symmetry group of a non-holonomic system.
Then the momentum map $J_H: T^*Q\to\ho^*$ is
constant during the motion.
\end{theorem}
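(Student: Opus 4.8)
The plan is to prove the non-holonomic Noether theorem for internal symmetries directly from the Lagrange--d'Alembert equations, using the fact that the constraint forces annihilate the distribution $\D$ and that the infinitesimal generators of the internal symmetry lie \emph{inside} $\D$. Recall the equations of motion in the almost Hamiltonian form $i(X^{\mathcal{M}})\Om = d\Ham + \sum_a\lam_a\tau^*\phi^a$ on $\M\subset T^*Q$, where the $\phi^a\in\Om^1(Q)$ cut out $\D$, i.e. $\D=\bigcap_a\ker\phi^a$. For $Y\in\ho$ let $\zeta^l_Y\in\X(Q)$ be the infinitesimal generator of the $l$-action and write $\vv<J_H,Y>: T^*Q\to\R$, so that $\vv<J_H,Y> = \theta^Q(\zeta^{l,T^*Q}_Y)$ where $\zeta^{l,T^*Q}_Y$ is the cotangent lift; equivalently, $\vv<J_H,Y>(q,p)=p(\zeta^l_Y(q))$. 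The key structural input is the hypothesis that $l$ acts by internal symmetries: $\A.\zeta^l_Y=0$, which says precisely that $\zeta^l_Y$ takes values in $\D=\ker\A$.

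First I would compute the derivative of $\vv<J_H,Y>$ along the motion. Since $Y\in\ho$ is $l$-fixed (we may take a basis of $\ho$ and work one element at a time, or invoke $\Ad_H$-equivariance of $J_H$ separately; for conservation it suffices to treat each fixed $Y$), the cotangent-lifted generator $\zeta^{l,T^*Q}_Y$ is the Hamiltonian vector field of $\vv<J_H,Y>$ with respect to $\Om$, i.e. $i(\zeta^{l,T^*Q}_Y)\Om = d\vv<J_H,Y>$. Hence
\[
 X^{\mathcal{M}}.\vv<J_H,Y>
 = d\vv<J_H,Y>(X^{\mathcal{M}})
 = \Om(\zeta^{l,T^*Q}_Y, X^{\mathcal{M}})
 = -\,i(X^{\mathcal{M}})\Om\,(\zeta^{l,T^*Q}_Y).
\]
Now substitute the equations of motion. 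The term $d\Ham(\zeta^{l,T^*Q}_Y)$ vanishes because $\mu$ is $l$-invariant and $V$ is too (so $\Ham$ is invariant under the cotangent-lifted $l$-action, hence constant along its generators). It remains to handle the constraint term $\sum_a\lam_a\,\tau^*\phi^a(\zeta^{l,T^*Q}_Y)$. But $\tau^*\phi^a(\zeta^{l,T^*Q}_Y) = \phi^a\big(T\tau.\zeta^{l,T^*Q}_Y\big) = \phi^a(\zeta^l_Y)$, and $\zeta^l_Y$ is $\D$-valued by the internal-symmetry hypothesis, so $\phi^a(\zeta^l_Y)=0$ for every $a$. Therefore $X^{\mathcal{M}}.\vv<J_H,Y>=0$ for all $Y\in\ho$, which is the assertion that $J_H$ is constant along the motion.

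I would then add a remark on the only genuinely delicate point. The Hamiltonian $\Ham$ is defined on all of $T^*Q$, but the dynamics lives on the constraint submanifold $\M$, and $X^{\mathcal{M}}$ is tangent to $\M$; one must be careful that the bracket $X^{\mathcal{M}}.\vv<J_H,Y>$ is computed as a derivative \emph{along} $\M$ (equivalently, of the restriction $\vv<J_H,Y>|_{\M}$), which is exactly what the Lagrange--d'Alembert equations provide, and that the identity $i(\zeta^{l,T^*Q}_Y)\Om = d\vv<J_H,Y>$ is an identity on $T^*Q$ that we simply evaluate on the vector $X^{\mathcal{M}}$. No issue arises because $X^{\mathcal{M}}\in T\M\subset T(T^*Q)$ and $\zeta^{l,T^*Q}_Y$ is tangent to $\M$ as well (since $\M$ is $l$-invariant — the $l$-action preserves $\mu$ and $\D$, hence $\M=\check\mu(\D)$). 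The main obstacle, such as it is, is thus purely bookkeeping: correctly identifying that ``internal symmetry'' ($\A.\zeta^l_Y=0$) is the same as ``$\zeta^l_Y$ is horizontal, i.e. $\D$-valued,'' so that the Lagrange multiplier terms — which a priori could spoil conservation — are killed by pairing against the constraint one-forms. This is precisely the feature that fails for the external ($G$- and $d$-) symmetries, whose generators are transverse to $\D$, explaining why $J_G$ and the $d$-momentum are generally not conserved while $J_H$ is.
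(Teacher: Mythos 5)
The paper offers no proof of this theorem: it is stated with a citation to Arnold--Kozlov--Neishtadt \cite{ArnoldIII}, so there is no internal argument to compare against. Your proof is correct and is essentially the standard one: pair the Lagrange--d'Alembert identity $i(X^{\mathcal{M}})\Om = d\Ham + \sum_a\lam_a\tau^*\phi^a$ with the cotangent-lifted generator $\zeta^{l,T^*Q}_Y$, kill $d\Ham(\zeta^{l,T^*Q}_Y)$ by invariance of the Lagrangian (hence of $\Ham$), and kill the multiplier terms because $\tau^*\phi^a(\zeta^{l,T^*Q}_Y)=\phi^a(\zeta^l_Y)=0$, which is exactly the content of the internal-symmetry hypothesis $\zeta^l_Y\in\D$. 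This matches the paper's definition of internal symmetry ($L$ invariant and $\zeta_Y\in\D$ for all $Y\in\ho$) and correctly isolates why the external ($G$- and $d$-) momenta are not conserved. One remark in your closing paragraph is inaccurate, though harmless: you assert that $\M=\check{\mu}(\D)$ is $l$-invariant because the $l$-action preserves $\D$. The paper explicitly does \emph{not} assume $\D$ to be $H$-invariant and notes that $l$ need not restrict to an action on $\M$. Fortunately your computation never needs $\zeta^{l,T^*Q}_Y$ to be tangent to $\M$: the identity $i(\zeta^{l,T^*Q}_Y)\Om=d\vv<J_H,Y>$ holds on all of $T^*Q$, and the Lagrange--d'Alembert equation is an identity of covectors on $T_{(q,p)}(T^*Q)$ at each point of $\M$, so evaluating it on the (possibly non-tangent) vector $\zeta^{l,T^*Q}_Y$ is legitimate. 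You should simply delete the tangency claim and keep the rest.
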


By an internal symmetry of $(Q,\D,L)$ we mean an action by a Lie group
$H$ on $Q$ such that $L$ is $H$-invariant and $\zeta_Y\in\D$ for all
$Y\in\ho$.
However, $\D$ is not required to be $H$-invariant.

\subsection{Compression in the presence of internal symmetries}

Via the metric we identify $TQ$ and $T^*Q$ and
the horizontal bundle $\mathcal{D} = \mathcal{A}^{-1}(0)$ is
identified with its image $\mathcal{M}\subset T^*Q$.  Let
$\mu_0$ denote the induced metric on $S$.  As in
Section~\ref{S:CompressionIntrinsic} we denote the
compressed Hamiltonian by $\Hamc := \Ham\circ\hlphi$ where
$\hlphi: TS\to\mathcal{D}$ is the horizontal lift mapping.
Recall also the projection $\rho: \M\to\M/G=T^*S$.  The
following describes how the internal symmetries descend to
the compressed system $(T^*S,\Omnh,\Hamc)$.

\begin{proposition}
\label{P:compressioninternalsymmetries} The following are
true.
\begin{enumerate}[\up (1)]
\item The $H$-action $d$ restricts to $\M$, and $\rho$ is
equivariant with respect to the cotangent lifted $H$-action on
$T^*S$.
\item $\Omnh$ is $H$-invariant.
\item $\Hamc$ and $\Xnh$ are $H$-invariant.
\item $J_H = (J_l|\M)\circ\hlphi$ where $J_H$ is the
standard \momap of the cotangent lifted $H$-action on
$(T^*S,\Om^S)$ and $J_l$ is the standard \momap of the
lifted $l$-action on $(T^*Q,\Om^Q)$.
\item $dJ_H.\Xnh = 0$.
\end{enumerate}
\end{proposition}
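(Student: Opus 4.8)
The plan is to prove the five items in order, since each feeds the next. For (1), the $d$-action preserves $\A$ (it is $d$-equivariant) and preserves the metric $\mu$, hence it preserves $\D = \A^{-1}(0)$ and therefore its image $\M = \check\mu(\D)$; so $d$ restricts to $\M$. Equivariance of $\rho$ is then checked against the factorization $\rho: \M\cong_\mu\D\xrightarrow{T\pi}TS\cong_{\mu_0}T^*S$: the $d$-action descends to $S$ (because $\pi$ is $d$-equivariant) and the induced action on $TS$ is the tangent-lifted one; transporting through the two metric identifications (using $d$-invariance of $\mu$ and of $\mu_0$) turns this into the cotangent-lifted $H$-action on $T^*S$, which is exactly what equivariance of $\rho$ asserts.

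For (2) and (3), I would work from the explicit formula $\Omnh = \Om^S - \vv<J_G\circ\hlphi,\tau_S^*K>$ of the Compression Proposition. The canonical form $\Om^S$ is invariant under any cotangent-lifted action, so it is $H$-invariant by (1). For the semibasic term, $d$-equivariance of $\A$ gives $d$-equivariance of both the horizontal lift $\hlphi$ and the curvature $K\in\Om^2(S,\g)$ (with $H$ acting on $\g$ through its given linear representation), while $J_G$ is equivariant for the $G$-action and intertwines the two; chasing these equivariances shows $\vv<J_G\circ\hlphi,\tau_S^*K>$ is $H$-invariant, hence so is $\Omnh$. Then $\Hamc = \Ham\circ\hlphi$ is $H$-invariant because $\Ham$ is built from the $l$- and $d$-invariant metric and $\hlphi$ is $d$-equivariant; and $\Xnh$, being defined by the $H$-invariant equation $i(\Xnh)\Omnh = d\Hamc$ with $\Omnh$ nondegenerate and $H$-invariant, is automatically $H$-invariant.

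Item (4) is the identity $J_H = (J_l|\M)\circ\hlphi$. Here $\hlphi: TS = T^*S \to \D = \M$ is the horizontal lift, and the content is that the standard momentum map of the lifted $l$-action on $T^*Q$, restricted to $\M$ and precomposed with $\hlphi$, equals the momentum map $J_H$ on $T^*S$ for the cotangent-lifted $H$-action there. The computation is: for $Y\in\ho$ and $\alpha\in T^*S$, $\langle J_H(\alpha),Y\rangle = \langle\alpha,\zeta_Y^{T^*S \to S}\rangle$ pairs $\alpha$ with the infinitesimal generator of the $d$-induced $H$-action on $S$; lifting horizontally and using that $l$ acts by internal symmetries ($\A.\zeta_Y^l = 0$, so $\zeta_Y^l$ is horizontal) together with $\pi$-equivariance of both $l$ and $d$, one identifies $T\pi.\zeta_Y^l$ with the generator on $S$, and the two metric identifications convert the $T^*Q$-pairing $\langle\hlphi(\alpha),\zeta_Y^l\rangle$ into $\langle\alpha,\zeta_Y^{S}\rangle$. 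This is the step I expect to be the main obstacle: one must carefully track three actions ($l$, $d$, and the induced one on $S$), verify they are compatible under $\pi$ and the metric, and keep the book-keeping of which $H$-action enters where — but it is bookkeeping, not deep.

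Finally, (5) follows immediately once the earlier pieces are in place. By the nonholonomic Noether theorem (Theorem~\ref{T:NoetherNonho}), since $l$ is an internal symmetry of $(Q,\D,L)$, the momentum map $J_l: T^*Q\to\ho^*$ is constant along the dynamics $X^{\mathcal M}$ on $\M$; that is, $dJ_l.X^{\mathcal M} = 0$ on $\M$. Now $X^{\mathcal M}$ is $\rho$-related to $\Xnh$, and $\hlphi$ is a fiberwise section of $\rho$ whose image is $\M$; dualizing via $\rho$ and using the formula from (4), $J_H\circ\rho = J_l|\M$ on $\M$, so along an integral curve $c(t)$ of $\Xnh$ with $\wt c(t)$ the corresponding integral curve of $X^{\mathcal M}$, $\frac{d}{dt}J_H(c(t)) = \frac{d}{dt}(J_l|\M)(\wt c(t)) = dJ_l.X^{\mathcal M} = 0$. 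Hence $dJ_H.\Xnh = 0$.
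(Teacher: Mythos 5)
Your proposal is correct and follows essentially the same route as the paper: invariance of $\D$ and $\M$ from the compatibility assumptions, invariance of $\Omnh$ by chasing the $d$-equivariance of $J_G$, $\hlphi$ and $K$ through the pairing, invariance of $\Xnh$ from its defining equation, the identity $\hlphi_q(\zeta_Y(s))=\zeta_Y^l(q)$ (horizontality of the $l$-generators) for item (4), and the non-holonomic Noether theorem combined with $\rho$-relatedness for item (5). No gaps; your write-up is simply a more detailed version of the paper's proof.
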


Note that $l$ does not necessarily restrict to an action on $\M$ and
$J_d$ (the \momap of the $d$-action) does not factor to
$J_H$.

\begin{proof}
(1) This is clear from the assumptions.

(2)
Note that $\Om^S$ is clearly
invariant. Further, $J_G$ is $H$-equivariant with respect to
the $d$-action since $\zeta^G: \gu\to TQ$ is $d$-equivariant
by assumption, and the same is true for $\hlphi$ and
$K$. That is, $h^*\vv<J_G\circ\hlphi,K> =
\vv<h^*(J_G\circ\hlphi),h.K> = \vv<J_G\circ\hlphi,K>$ for
all $h\in H$, since $h^*\alpha = \alpha\circ h^{-1}$ for
$\alpha\in \gu^*$.

(3)
Since $\hlphi$ is $H$-equivariant for the $d$-action and
$\Ham$ is invariant the first point is clear.  For the
second we use that also $\Omnh$ is $H$-invariant whence
\[h^*\Xnh = h^*((\check{\Omnh})^{-1}d\Hamc) =
((h^*\Omnh)^{\check{}})^{-1}h^*d\Hamc = \Xnh
\] for $h\in H$.

(4)
Let $Y\in\ho$ and $(s,u)\in T^*S=TS$, then
$\hlphi_q(\zeta_Y(s)) = \zeta_Y^l(q)$ implies that
\[\vv<J_H(s,u),Y> = (\mu_0)_s(u,\zeta_Y(s)) =
\mu_q(\hlphi_q(u),\zeta_Y^l(q)) = \vv<J_l(\hlphi(s,u)),Y>.
\]

(5)
This follows from the previous point and
Theorem~\ref{T:NoetherNonho}.
\end{proof}

\subsection{Truncation}\label{S:truncation}
By the above proposition the compressed system\break
$(T^*S,\Omnh,\Xnh)$ is $H$-invariant and $J_H$ is a conserved
quantity. Thus one would be tempted to do almost Hamiltonian
reduction.  (See \cite{PlanasBielsa2004}.)  However, $J_H$ is not
the \momap corresponding to $\Omnh$, that is, for $Y\in\ho$
\[
 i(\zeta_Y)\Omnh \neq d\vv<J_H,Y>
\]
in general.  Thus the restriction of $\Omnh$ to a level
set of $J_H$ will not be a horizontal form in general
whence it does not factor to a reduced form on the `would be' almost
symplectic quotient. (This is the situation for the Chaplygin ball
problem treated in Section~\ref{S:Chaplygin}.)
To
remedy the situation we truncate the $\vv<J_G,K>$-term thus
changing $\Omnh$ in a certain way that does not affect the
equations of motion.
In effect,
we will replace $(T^*S,\Omnh,\Hamc)$ by a different almost
Hamiltonian system $(T^*S,\wt{\Om},\Hamc)$ which has the same
dynamics given by $\Xnh$.

To motivate the construction notice that the obstruction to
horizontality of the restriction of $\Omnh$ to a level set of $J_H$ is
just
\[
 i(\zeta_Y)\Omnh - d\vv<J_H,Y>
 =
 i(\zeta_Y)\Omnh - i(\zeta_Y)\Om^S
 =
 -i(\zeta_Y)\vv<J,K>.
\]
So the vertical directions are problematic.
On the other hand, we have for the dynamics
\[
 d\Hamc = i(\Xnh)\Om^S - i(\Xnh)\vv<J,K>.
\]
Thus we need to invent a device whereby we make the $\vv<J,K>$-term
vanish upon insertion of vertical vectors while it remains unchanged
when contracted with $\Xnh$. In particular we need a way to distinguish
$\Xnh$ from vertical directions. This calls for a connection such that
$\Xnh$ is horizontal.

\begin{theorem}\label{thm:trunc}
Suppose there is a
connection $\sigma\in\Om(T^*S,\ho)$ of the principal bundle
$T^*S\toto(T^*S)/H$ that satisfies $\sigma\Xnh = 0$.  (See
Proposition~\ref{P:existence-of-connct}.)  Let $\chi:
TT^*S\to TT^*S$ denote the horizontal projection associated
to $\sigma$.  Then the truncated form
\[
  \wt{\Om}
  :=
  \Om^S - \vv<J_G\circ\hlphi,\tau_S^*K>\circ\Lam^2\chi
\]
has the following properties.
\begin{enumerate}[\up (1)]
\item
It is non-degenerate.
\item
It is $H$-invariant.
\item
$i(\Xnh)\wt{\Om} = d\Hamc$.
\item
$i(\zeta_Y)\wt{\Om} = d\vv<J_H,Y>$ for all $Y\in\ho$.
\end{enumerate}
\end{theorem}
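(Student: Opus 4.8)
The plan is to verify the four properties of $\wt{\Om} = \Om^S - \vv<J_G\circ\hlphi,\tau_S^*K>\circ\Lam^2\chi$ in turn, exploiting the fact that the correction term is now a \emph{horizontal} semi-basic two-form. Write $\beta := \vv<J_G\circ\hlphi,\tau_S^*K>$ for the original semi-basic term, so that $\Omnh = \Om^S - \beta$ and $\wt{\Om} = \Om^S - \beta\circ\Lam^2\chi$. The key structural observation is that $\chi$ kills vertical vectors (by definition of horizontal projection), while $\chi\Xnh = \Xnh$ since $\sigma\Xnh = 0$. This single fact does most of the work for (3) and (4).

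\smallskip

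For property (3): since $\Xnh$ is horizontal, $\Lam^2\chi(\Xnh \wedge \cdot) = \Xnh \wedge \chi(\cdot)$, so $i(\Xnh)(\beta\circ\Lam^2\chi) = (i(\Xnh)\beta)\circ\chi$. But $\beta$ is semi-basic (it is a $\tau_S$-pullback of a form contracted against $J_G\circ\hlphi$), hence $i(\Xnh)\beta$ depends only on $T\tau_S.\Xnh$ and is itself a semi-basic one-form; precomposing a semi-basic one-form with $\chi$ leaves it unchanged because $\chi$ acts as the identity modulo vertical directions and $T\tau_S$ annihilates the difference. Therefore $i(\Xnh)(\beta\circ\Lam^2\chi) = i(\Xnh)\beta$, which gives $i(\Xnh)\wt{\Om} = i(\Xnh)\Om^S - i(\Xnh)\beta = i(\Xnh)\Omnh = d\Hamc$. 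For property (4): for $Y\in\ho$ the generator $\zeta_Y$ is vertical, so $\chi\zeta_Y = 0$, whence $i(\zeta_Y)(\beta\circ\Lam^2\chi) = 0$. Thus $i(\zeta_Y)\wt{\Om} = i(\zeta_Y)\Om^S = d\vv<J_H,Y>$, where the last equality is just the statement that $J_H$ is the standard momentum map for the cotangent-lifted $H$-action on $(T^*S,\Om^S)$.

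\smallskip

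For property (2), $H$-invariance: $\Om^S$ is $H$-invariant; $\beta = \vv<J_G\circ\hlphi,\tau_S^*K>$ is $H$-invariant by the computation already carried out in the proof of Proposition~\ref{P:compressioninternalsymmetries}(2) (using $d$-equivariance of $J_G$, $\hlphi$, $K$ and the representation on $\gu$); and $\chi$ is $H$-invariant because the connection $\sigma$ is required to be a genuine principal connection on $T^*S\toto(T^*S)/H$, hence $H$-equivariant, so its horizontal projection commutes with the $H$-action. The composite $\beta\circ\Lam^2\chi$ is then $H$-invariant, and so is $\wt{\Om}$. Property (1), non-degeneracy, is the main obstacle and the only point requiring real work: $\wt{\Om}$ differs from the non-degenerate $\Om^S$ by the semi-basic term $\beta\circ\Lam^2\chi$, and semi-basic perturbations of the canonical form are always non-degenerate. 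Concretely, in fibered coordinates $(s,u)$ on $T^*S$ a semi-basic two-form has no $du\wedge du$ component, so the matrix of $\wt{\Om}$ in the basis $(du,ds)$ has the block form $\begin{pmatrix} 0 & I \\ -I & * \end{pmatrix}$ just like $\Om^S$ (only the lower-right block changes), and such a matrix is invertible regardless of what the starred block is. One should note that $\beta\circ\Lam^2\chi$ is still semi-basic: $\chi$ maps into $TT^*S$ and composing with the semi-basic $\beta$, which already factors through $\Lam^2 T\tau_S$, keeps the result semi-basic since $T\tau_S\circ\chi$ still annihilates nothing new in the fiber direction that matters—more carefully, $\beta(\chi X, \chi Y)$ vanishes as soon as either argument is vertical because $\chi$ sends vertical vectors to $0$, so in particular the $du\wedge du$ block vanishes. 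This yields non-degeneracy and completes the proof.
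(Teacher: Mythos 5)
Your arguments for (2) and (4) are correct and coincide with the paper's. The trouble is in (3) and (1), where you conflate two different notions of ``vertical'': vertical for the cotangent fibration $\tau_S\colon T^*S\to S$ (which is what ``semi-basic'' refers to) and vertical for the principal bundle $T^*S\toto(T^*S)/H$ (which is what $\sigma$ and $\chi$ see). The $H$-action on $T^*S$ is the cotangent lift of a nontrivial (indeed free) action on $S$, so its generators satisfy $T\tau_S.\zeta_Y=\zeta^S_Y\neq 0$: they are \emph{not} $\tau_S$-vertical. Hence your key step in (3) --- that precomposing the semi-basic one-form $i(\Xnh)\beta$ with $\chi$ changes nothing because ``$T\tau_S$ annihilates the difference'' $X-\chi X=\zeta_{\sigma(X)}$ --- is false. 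If it were true, the same reasoning would give $\beta\circ\Lam^2\chi=\beta$ outright, so truncation would change nothing; combined with your (correct) proof of (4) this would say that $J_H$ is already the momentum map for $\Omnh$, which is exactly what the paper emphasizes is \emph{not} the case and is the whole reason for truncating. What is actually needed is $\vv<J_G,K>(\Xnh,\zeta_Y)=0$ for all $Y\in\ho$, and this is a dynamical fact, not a pointwise one: the paper obtains it from $\Omnh(\Xnh,\zeta_Y)=d\Hamc.\zeta_Y=0$ ($H$-invariance of $\Hamc$) together with $\Om^S(\Xnh,\zeta_Y)=-\vv<dJ_H.\Xnh,Y>=0$ (the non-holonomic Noether theorem, Theorem~\ref{T:NoetherNonho}). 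Your proof uses neither the invariance of $\Hamc$ nor the conservation of $J_H$; since the identity $i(\Xnh)\wt{\Om}=i(\Xnh)\Omnh$ would then hold with an arbitrary $\sigma$-horizontal vector field in place of $\Xnh$, which it does not, something had to give.

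The same conflation undermines your proof of (1): you justify semi-basicness of $\vv<J_G,K>\circ\Lam^2\chi$ by saying that ``$\chi$ sends vertical vectors to $0$'', but $\chi$ annihilates only the $H$-vertical vectors, not the $\tau_S$-vertical ones; for a $\tau_S$-vertical $X$ one has $T\tau_S.\chi X=-\zeta^S_{\sigma(X)}$, which need not vanish, so the $du\wedge du$ block of the truncated term need not be zero and your block-matrix argument does not apply as written. (It does apply to $\Omnh$ itself, and to $\wt{\Om}$ whenever $\sigma$ vanishes on $\tau_S$-vertical vectors --- as the explicit connection for the Chaplygin ball in Section~\ref{S:Chaplygin} does --- but that is an additional hypothesis you would have to impose or verify; the theorem as stated assumes only $\sigma\Xnh=0$.)
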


\begin{proof} Properties (1) and (2) are immediate. (Use
that $\chi$ is $H$-equivariant for the second.)

(3)
We need to show that $\wt{\Om}(\Xnh,X) = \Omnh(\Xnh,X)$
for all
$X\in\X(T^*S)$. If $X$ is horizontal then this is
obvious. Suppose $X$ is vertical, that is $X=\zeta_Y$ for
some $Y\in\ho$. But then we have
\[
 \vv<J_G,K>(\Xnh,\zeta_Y) = 0;
\]
this follows because
\[
 \Omnh(\Xnh,\zeta_Y) = d\Hamc.\zeta_Y = 0
\]
by $H$-invariance of $\Hamc$, and
\[
 \Om^S(\Xnh,\zeta_Y)
 = -\vv<dJ_H.\Xnh,Y>
 = 0
\]
by conservation of $J_H$.  Thus $\wt{\Om}(\Xnh,\zeta_Y) =
\Om^S(\Xnh,\zeta_Y) = \Omnh(\Xnh,\zeta_Y)$.

(4)
This is true since the truncated $\vv<J_G,K>$-term vanishes, by
construction,
on vertical vectors and $J_H$ is the canonical \momap.
\end{proof}

Note that the above proof relies on both decisive features of an
almost Hamiltonian system with symmetries: it uses invariance of the
Hamiltonian as well as the conserved quantity.

When $\A$ is the mechanical connection on $Q\toto S$ associated to
the metric $\mu$ then compression equals symplectic reduction at
$0$. Thus $\Omnh=\Om^S$ is a true symplectic form in this case and the
$H$-action is Hamiltonian with \momap $J_H$. Obviously, this is
compatible with the truncation procedure in the trivial sense. Thus we
recover sympletic reduction in stages.

Of course, there may also be a connection $\wt{\sigma}\in\Om^1(Q,\ho)$
with the property that $\Xnh\in\ker\tau^*\wt{\sigma}$. If this is the
case then one can replace $\sigma$ in the theorem by $\tau^*\wt{\sigma}$
but in general this seems to be too much to ask for. The analog of
Proposition~\ref{P:existence-of-connct} does not hold.

Thus to describe the dynamics of $\Xnh$ we may deal with the
system $(T^*S,\wt{\Om},\Hamc)$ which has the advantage that
it not only admits $H$ as a symmetry group but also produces
the desired \momap. Now one can perform Hamiltonian reduction
(\cite{PlanasBielsa2004}) with respect to the non-closed
form $\wt{\Om}$.

Now we address the question of existence of the auxiliary connection $\sigma$
needed for truncation.
Consider the vertical space $\ver(H)$ of the lifted
$H$-action on $T^*S$. Define the sets
\[
\mathcal{E} = \Xnh^{-1}(\ver(H))
\textup{ and }
\mathcal{U} = (T^*S)\setminus\mathcal{E}
\]
and note that $\mathcal{U}$ is an open
sub-manifold while $\mathcal{E}$ is the set of relative equilibria.

\begin{proposition}
\label{P:existence-of-connct} The following are true.
\begin{enumerate}[\up (1)]
\item
$\mathcal{U}$ and $\mathcal{E}$ are $H$-invariant and
invariant under the dynamics of $\Xnh$.
\item
On $\mathcal{U}$ there is a connection $\sigma$ such
that $\Xnh$ is horizontal.
\end{enumerate}
\end{proposition}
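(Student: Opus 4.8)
The plan is as follows. For part (1), the sets $\mathcal{E}$ and $\mathcal{U}$ are defined purely in terms of $\Xnh$ and the vertical distribution $\ver(H)$ of the lifted $H$-action. By Proposition~\ref{P:compressioninternalsymmetries}(3), $\Xnh$ is $H$-invariant, and $\ver(H)$ is manifestly $H$-invariant; hence the preimage $\mathcal{E} = \Xnh^{-1}(\ver(H))$ is $H$-invariant, and so is its complement $\mathcal{U}$. That $\mathcal{U}$ is open follows because $\ver(H)$ is a (closed) subbundle of $TT^*S$ and $\Xnh$ is a continuous (indeed smooth) section of $TT^*S$, so $\{z : \Xnh(z)\in\ver(H)_z\}$ is closed. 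For invariance under the flow of $\Xnh$: if $z\in\mathcal{E}$, i.e.\ $\Xnh(z) = \zeta_Y(z)$ for some $Y\in\ho$, then along the integral curve $c(t)$ through $z$ one has, using $H$-invariance of $\Xnh$, that $\Xnh$ restricted to the $H$-orbit of $c(t)$ agrees with a fundamental vector field; more precisely, conservation of $J_H$ (Proposition~\ref{P:compressioninternalsymmetries}(5)) confines the motion to a level set $J_H^{-1}(\lam)$, and on the orbit $H_\lam\cdot z$ the vector field $\Xnh$ is tangent to the orbit at $z$; $H$-invariance of $\Xnh$ then propagates this along the flow. I would phrase this as: the relative equilibria of an $H$-invariant vector field form an $H$- and flow-invariant set, which is standard.

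For part (2), the task is to produce a principal connection $\sigma\in\Om^1(\mathcal{U},\ho)$ for the bundle $\mathcal{U}\toto\mathcal{U}/H$ with $\sigma(\Xnh) = 0$. The key observation is that on $\mathcal{U}$, by definition, $\Xnh(z)\notin\ver(H)_z$ for every $z$, so $\R\cdot\Xnh(z)$ is a line transverse to the vertical space. First I would pick \emph{any} principal connection $\sigma_0$ on $\mathcal{U}\toto\mathcal{U}/H$ (these always exist, by an averaging/partition-of-unity argument, since $H$ acts freely and properly). Then $\sigma_0(\Xnh)$ is an $\ho$-valued function on $\mathcal{U}$, which need not vanish. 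To correct this, set $\sigma := \sigma_0 - \beta$ where $\beta\in\Om^1(\mathcal{U},\ho)$ is horizontal (i.e.\ $\beta(\zeta_Y) = 0$ for all $Y\in\ho$) and $H$-equivariant; such a $\sigma$ is again a principal connection. The requirement $\sigma(\Xnh) = 0$ forces $\beta(\Xnh) = \sigma_0(\Xnh)$. So it suffices to exhibit a horizontal, equivariant $\ho$-valued one-form $\beta$ with $\beta(\Xnh) = \sigma_0(\Xnh)$. Since on $\mathcal{U}$ the horizontal projection $\chi_0$ (associated to $\sigma_0$) sends $\Xnh$ to a nowhere-vertical vector, and in fact $\chi_0\Xnh = \Xnh$ because $\sigma_0(\Xnh)$ being the vertical part... — more carefully: write $\Xnh = \Xnh^{\hor} + \zeta_{\sigma_0(\Xnh)}$; on $\mathcal{U}$ the horizontal part $\Xnh^{\hor}$ is nonzero. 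I would then construct $\beta$ by choosing an $H$-equivariant horizontal one-form $\alpha\in\Om^1(\mathcal{U})$ with $\alpha(\Xnh^{\hor}) = 1$ (possible because $\Xnh^{\hor}$ is a nowhere-zero horizontal vector field spanning a trivial sub-line-bundle of the horizontal bundle, after passing to $H$-invariant objects — e.g.\ take the metric dual of $\Xnh^{\hor}$ with respect to an $H$-invariant fibre metric on the horizontal bundle, rescaled) and put $\beta := \alpha\otimes\sigma_0(\Xnh)$, i.e.\ $\beta(v) := \alpha(v)\,\sigma_0(\Xnh)$. One checks $\beta$ is horizontal and equivariant (using equivariance of $\alpha$ and of $\sigma_0(\Xnh)$), and $\beta(\Xnh) = \alpha(\Xnh)\sigma_0(\Xnh) = \alpha(\Xnh^{\hor})\sigma_0(\Xnh) = \sigma_0(\Xnh)$, as desired.

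The main obstacle is the construction of the equivariant horizontal one-form $\alpha$ with $\alpha(\Xnh^{\hor}) = 1$: this requires that $\Xnh^{\hor}$ be nowhere zero on $\mathcal{U}$ (guaranteed by the definition of $\mathcal{U}$, since $\Xnh$ is nowhere vertical there) and an equivariant choice, which is handled by putting an $H$-invariant metric on the horizontal bundle $\ker\sigma_0$ (average a metric over $H$ using properness) and setting $\alpha := \langle\Xnh^{\hor},-\rangle / \|\Xnh^{\hor}\|^2$ on horizontal vectors, extended by zero on verticals. Everything else is a routine check that the corrected $\sigma$ is still a principal connection (linearity, reproduces fundamental vector fields — unchanged since $\beta$ kills verticals — and equivariance). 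I would also remark that, in contrast, one cannot in general pull such a $\sigma$ back from a connection $\wt\sigma$ on $Q$, as noted after Theorem~\ref{thm:trunc}, because the analogous transversality of $\Xnh$ to the orbit directions need not hold upstairs.
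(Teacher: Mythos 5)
Your proposal is correct. For part (1) it is essentially the paper's argument, though you leave the flow-invariance step at the level of ``this is standard'': the paper makes it explicit by checking that for $(s,u)\in\mathcal{E}$ with $\Xnh(s,u)=\zeta_Y(s,u)$ the curve $c(t)=\exp(tY).(s,u)$ is an integral curve of $\Xnh$ (using $H$-invariance of $\Xnh$ and $\Ad(\exp(tY))Y=Y$) which stays in the $H$-invariant set $\mathcal{E}$; uniqueness of integral curves then gives the claim, and the detour through conservation of $J_H$ in your sketch is not needed. For part (2) you take a genuinely different, though equally valid, route. The paper constructs the horizontal distribution directly: it sets $\mathcal{F}=\R\Xnh\oplus\ver(H)$ (a sub-bundle on $\mathcal{U}$ precisely because $\Xnh$ is nowhere vertical there), chooses an $H$-invariant metric, and declares $\hor(\sigma)=\R\Xnh\oplus\mathcal{F}^{\perp}$, which is an $H$-invariant complement of $\ver(H)$ containing $\Xnh$. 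You instead work in the affine space of principal connections: start from an arbitrary $\sigma_0$ and subtract the tensorial form $\beta=\alpha\otimes\sigma_0(\Xnh)$, where $\alpha$ is the invariant horizontal one-form obtained as the normalized metric dual of $\Xnh^{\textup{hor}}$; the key point that $\Xnh^{\textup{hor}}$ is nowhere zero on $\mathcal{U}$ is exactly the same use of the definition of $\mathcal{U}$ as in the paper. Both arguments hinge on the same two inputs --- an averaged $H$-invariant metric and the transversality of $\Xnh$ to $\ver(H)$ on $\mathcal{U}$ --- so neither buys extra generality; the paper's version is shorter because it avoids the equivariance bookkeeping for $\beta$, while yours makes transparent that the admissible $\sigma$ form an affine subspace of the space of all principal connections, which is a useful perspective given the paper's later remarks on the non-uniqueness of the truncation.
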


\begin{proof}
(1)
It suffices to show the assertions for
$\mathcal{E}$. Invariance under the $H$-action is clear since $\Xnh$ and
$\ver(H)$ are $H$-invariant.  Fix $(s,u)\in\mathcal{E}$ such
that $\Xnh(s,u)=\zeta_Y(s,u)$ for some $Y\in\ho$, and
consider the curve $c(t) = \exp(tY).(s,u)$. By $H$-invariance
the curve stays in $\mathcal{E}$.  We show that it is an
integral curve:
\begin{align*}
c'(t)
 = \zeta_Y(\exp(tY).(s,u))
 = \exp(tY).\zeta_Y(s,u)
 = \exp(tY).\Xnh(s,u)
 = \Xnh(c(t)).
\end{align*}
(2)
Consider the $H$-invariant sub-bundle of $T\mathcal{U}$ given
by $\mathcal{F} = \R\Xnh\oplus\ver(H)$ where $\ver(H)$ is
the vertical space of the induced $H$-action on
$\mathcal{U}$.  Take an $H$-invariant metric on
$\mathcal{U}$.  Such a metric always exists since the action
is proper. Now define a horizontal sub-bundle by
$\hor(\sigma) = \R\Xnh\oplus\mathcal{F}^{\perp}$ where the
orthogonal is taken with respect to the metric.  By
construction $\hor(\sigma)$ is an $H$-invariant complement
of the vertical space of the $H$-action on $\mathcal{U}$.
\end{proof}

Consider a point $(s,u)\in\mathcal{E}$. Clearly
$s' = T\tau_S.\Xnh(s,u) = u$
where we identify again $T^*S=TS$ via
$\mu_0$.  Since $(s,u)\in\mathcal{E}$ there is a $Y\in\ho$
such that $\Xnh(s,u) = \zeta^{TS}_Y(s,u)$ whence $u=\zeta^l_Y(s)$.
In the physical applications we have in mind it is true that
$\mu(\zeta^l_Y,\zeta^G_v)=0$.  Therefore, $\vv<J_G(s,u),v> =
\mu_q(\hlphi(u),\zeta^G_v) =
-\mu_q(\zeta^l_Y,\zeta_v^G) = 0$
whence the $\vv<J_G,K>$-term vanishes upon restriction to
$\mathcal{E}$.  We can thus understand the dynamics on
$(T^*S,\Omnh,\Hamc)$ by treating
$(\mathcal{E},\Om^S|\mathcal{E},\Hamc|\mathcal{E})$ and
$(\mathcal{U},\wt{\Om},\Hamc|\mathcal{U})$ as individual
problems.

\section{Example: Chaplygin's rolling ball}\label{S:Chaplygin}

The $n$-dimensional Chaplygin ball ($n\ge3$) concerns a rigid ball that rolls on an
$n-1$-dimensional table without slipping and whose geometric center
coincides with its center of mass. The mass distribution is not
assumed to be homogeneous.

By adjusting the units appropriately we assume the radius and the mass
of the ball both equal to $1$.  It is convenient to identify the table with
$\R^{n-1}\times\set{-1}$ whence the motion of the center of the ball is
given by a curve $(x(t),0)\in\R^{n-1}\times\set{0}$.\footnote{We will
  write row vectors but treat them as column vectors.}
Let $e_1,\dots,e_n$ denote the standard basis of $\R^n$. The
orientation of the ball at time $t_0$ is determined by a
unique element $s(t_0)\in\SO(n)=:S$ that relates this basis to a
moving frame that is attached to the center of the ball and rotates
with it. Thus the configuration space of the system is
\[
 Q:=S\times\R^{n-1}.
\]
Consider a fixed marked point $b$ on the surface of the ball. The
motion of this point is described by the curve
\[
 z(t) = (x(t),0)+s(t).b.
\]
The constraint of rolling without slipping is that the
velocity of the contact point is $0$. For the contact point at time
$t_0$ we have that $s(t_0).b=-e_n$ whence $z'(t_0)=0$ implies that
\[
 (x'(t_0),0) = s'(t_0)s(t_0)^{-1}.e_n.
\]
We put
$s'(t_0)s(t_0)^{-1} = \wt{u}\in\so(n)_{\textup{R}}$
where $\so(n)_{\textup{R}}$ is identified with the Lie algebra of
right invariant vector fields on $S$.
In other words, the constraints are satisfied iff
\[
 (s's^{-1},x')
  \in\wt{\D}
 := \set{(\wt{u},x')\in\so(n)_{\textup{R}}\times\R^{n-1}:
          \wt{u}.e_n = (x',0)}.
\]
Thus the set of allowed motions is described by the condition that the
velocities in the right trivialization (space frame) belong to $\wt{\D}$.
If we define
\[
 \wt{\A}:
 \so(n)_{\textup{R}}\longto\R^{n-1},\;
 \wt{u}\longmapsto\wt{u}.e_n
 \longmapsto -\sum_{a=1}^{n-1}\vv<e_a,\wt{u}.e_n>_E \,e_a,
\]
where $\vv<.,.>_E$ denotes the standard inner product,
then
\[
 \wt{\D}
 =
 \set{(s,\wt{u},x,-\wt{\A}(\wt{u}))}
 \subset
 S\times\so(n)_{\textup{R}}\times T\R^{n-1}.
\]
The sign in the definition of $\wt{\mathcal{A}}$ is included so that the
associated horizontal subspace (see below) can be written in the usual
way.
Let
\[
 H := \set{h\in S: h.e_n=e_n}
\]
with Lie algebra $\ho$, let
$\vv<.,.>$ denote the Killing form, and let $\ho\oplus\ho^{\bot}$ be
the corresponding decomposition of $\so(n)$.
When appropriate we will identify $H=\SO(n-1)$ and consider it as
acting on $\R^{n-1}$.
In terms of matrix
notation $\ho^{\bot}$ corresponds to the subspace of matrices that
have non-zero entries only in the last column and row.
Let
\[
 Y_{\alpha},\;
 Z_a,\;
 \alpha=1,\dots,\dim\ho,\;
 a=1,\dots,n-1
\]
denote an
orthonormal basis that is adapted to this decomposition. Then, if the
basis is ordered and oriented in the right way, we may
write
$\wt{\A} = -\sum_a \vv<Z_a,.>e_a$.
It
will be convenient to work with the left trivialization (body frame).
From now on we trivialize $TS = S\times\so(n)$ via the left
trivialization. Consider the $\R^{n-1}$-valued one-form on $S$ defined by
\[
 \A: TS = S\times\so(n)\longto\R^{n-1},\;
 (s,u)\longmapsto\Ad(s).u = \wt{u}\longmapsto\wt{\A}(\wt{u}).
\]
Via right multiplication we extend the
basis $Y_{\alpha},Z_{a}$ to a frame on $S$:
\begin{equation}\label{e:rhos}
  \xi_{\alpha}(s)
   := \zeta^l_{Y_{\alpha}}(s)
    = \Ad(s^{-1})Y_{\alpha}
  \textup{ and }
  \zeta_a(s)
   := \zeta^l_{Z_{a}}(s)
    = \Ad(s^{-1})Z_{a}.
\end{equation}
The corresponding co-frame will be called
$\rho^{\alpha},\eta^a$. We shall stick to the convention of
using lower case Greek letters $\alpha,\beta,\gamma$ to
refer to $Y_{\alpha}$'s and lower Latins $a,b,c$ for
$Z_a$'s.
In this frame the form $\A\in\Om^{1}(S,\R^{n-1})$ reads
\[
 \A = -\sum_{a=1}^{n-1}\eta^a e_{a}.
\]
For $n=3$ one can get the same formula for $\A$ as in
\cite{EhlersKoiller} by inserting appropriate signs which corresponds
to rearranging the basis. All such choices cancel out in the
subsequent.

Let $\mathbb{I}$ denote the inertia tensor that describes the mass
distribution of the ball. Then the appropriate metric on $Q$ is the product metric
$\mu=\vv<\mathbb{I}.,.>+\vv<.,.>_E$ and the Lagrangian of the system is the
kinetic energy function associated to $\mu$. Thus the Chaplygin ball is
the non-holonomic system described by the triple
\begin{equation}\label{e:ball}
 (Q,\D,\by{1}{2}||\cdot ||_{\mu})
\end{equation}
where $\D$ is the sub-bundle defined by
\[
 \D
 =
 \set{(s,u,x,-\A_s(u))}
 \subset
 S\times\so(n)\times T\R^{n-1}.
\]
Notice that the kinetic energy of the system is left invariant while the
distribution that describes the constraints is right invariant.
The system (\ref{e:ball}) is invariant under the Lie group
action given by addition of $\R^{n-1}$ on the $\R^{n-1}$-factor of
$Q$.
Clearly, $\D$ defines a connection on the principal bundle $\R^{n-1}\hookto
Q\toto S$ with connection form $\A\in\Om^{1}(S,\R^{n-1})$.
Thus (\ref{e:ball}) is a $G$-Chaplygin system in the sense of
Section~\ref{S:ham-setting} with $G=\R^{n-1}$.
Moreover, $\A$ has the following properties.
\begin{enumerate}[\up (1)]
\item
$\mathcal{A}(hs,u) = h\A(s,u)$ for all $h\in H$ and
$(s,u)\in TS$, that is, $\D$ is invariant under the diagonal $H$-action.
\item
$H$ acts by internal symmetries, that is,
$\A.\zeta^l_Y=0$ for all $Y\in\ho$.
\item
$\A(sg^{-1},\Ad(g)u) = \A(s,u)$ for all $g\in S$
and $(s,u)\in TS$.
\end{enumerate}
These properties have a physical meaning. Property (1) says that the
constraints are invariant under simultaneous rotation of the space
frame and the ball about the vertical axis. The second says that
rotation of the ball about the vertical axis is an allowed motion. The
third states that the system is invariant with respect to rotations of
the space frame. Notice also that properties (1) and (2) correspond to
the compatibility conditions stated at the beginning of
Section~\ref{S:InternalSym}.

\subsection{The compressed system}

As stated above the Lagrangian $\mathcal{L}$ of the system is the kinetic energy
associated to $\mu$. This Lagrangian is non-degenerate whence we are
in the situation of Section~\ref{S:ham-setting}, and we will denote
the corresponding Hamiltonian by $\Ham$. For ease of notation we will
write $V:=\R^{n-1}$.

Let
$\Om^Q$ be the canonical symplectic form on $T^*Q$.
In accordance with Section~\ref{S:CompressionIntrinsic}
we
describe now the compression of the system
$(Q,\mathcal{L},\D=\set{(s,u;x,-\A_s(u))})$.
We will henceforth identify
\begin{equation}
 T^*S = TS = \D/V
\end{equation}
via the induced metric $\mu_0$.
The compressed
Hamiltonian reads
\[
 \Hamc(s,u) = \by{1}{2}\vv<u,\mathbb{I}u>+\by{1}{2}\vv<\A_s(u),\A_s(u)>,
\]
and note that $\Hamc$ is invariant under the induced
$H$-action on $TS$.
(This action has various equivalent descriptions -- see
Proposition~\ref{P:compressioninternalsymmetries}.)
The Hamiltonian $\Hamc$ is the sum of a left-
and a right-invariant factor. Systems of this type are
sometimes called $L+R$-systems. See \cite{Fed99}.

According to Section~\ref{S:CompressionIntrinsic} the
compressed almost symplectic form on $TS$ is of
the form
\[
 \Omnh
 = \Om^S - \vv<J_V\circ\textup{hl}^{\mathcal{A}},d\A>
 = \Om^S + \vv<\A,d\A>.
\]
It will be convenient to introduce the following set of
functions on $TS$:
\[
  l_{\alpha}(s,u) = \rho^{\alpha}_s(u),\;
    \wt{l}_{\alpha}(s,u) = l_{\alpha}(s,\mathbb{I}u),
    \textup{ and }
        g_a(s,u) = \eta^a_s(u),\;
    \wt{g}_a(s,u) = g_a(s,\mathbb{I}u)
\]
where $\rho^{\alpha},\eta^a$ denotes the co-frame associated to
(\ref{e:rhos}).
These functions have a physical meaning; $l_{\alpha},g_a$
are the components of angular velocity in the space frame
and $\wt{l}_{\alpha},\wt{g}_a$ are those of angular momentum
about the center of mass also in the space frame.
We may thus write the canonical symplectic form as
\begin{equation}\label{e:Om^S}
 \Om^S
  = -d\big(\sum \wt{l}_{\alpha}\rho^{\alpha}
    + \sum(\wt{g}_a+g_a)\eta^a\big).
\end{equation}
(Remember that the identification of $TS$ with its dual is
via $\mu_0$.)
The formulas
\begin{equation}\label{e:drho}
    d\rho^{\alpha} = \by{1}{2}\sum
             c^{\alpha}_{\beta\gamma}\rho^{\beta}\wedge\rho^{\gamma}
             + \by{1}{2}\sum
             c^{\alpha}_{ab}\eta^a\wedge\eta^b
    \textup{ and }
    d\eta^a = \sum c^a_{\beta b}\rho^{\beta}\wedge\eta^b
\end{equation}
will be used very often; here the summation is over
repeated indices and $c^._{..}$ are the structure constants.
The compressed form thus becomes
\begin{equation}\label{e:Omnh}
 \Omnh
  = \Om^S + \sum g_a c^a_{\beta b}\rho^{\beta}\wedge\eta^b.
\end{equation}
Via the trivialization we write the non-holonomic
vector-field $\Xnh = (\check{\Omnh})^{-1}d\Ham$ on $TS$ as
\[
 \Xnh(s,u) = (s'(s,u),u'(s,u)) \in\so(n)\times\so(n).
\]
Using right
invariant vector fields we thus have that
\begin{equation}\label{e:s'}
    s' = \sum l_{\alpha}\xi_{\alpha} + \sum g_a\zeta_a.
\end{equation}
This is just the first half of Hamilton's equations which says that
$s' = u$.

According to
Proposition~\ref{P:compressioninternalsymmetries} the \momap
associated to the $l$-action compresses to the standard
\momap
\[
J_H: TS\longto\ho^* =_{\vv<.,.>} \ho,\text{ }
(s,u)\longmapsto\sum\wt{l}_{\alpha}(s,u)Y_{\alpha}
\]
with
respect to the lifted $H$-action on $(TS,\Om^S)$.
Furthermore, we have the conservation law $dJ_H.\Xnh=0$.

\subsection{Truncation}\label{S:trunc-ball}
We are now in the situation of Section~\ref{S:truncation}. Namely
one can verify that
the
conserved quantity $J_H$ is \emph{not} the \momap with respect to
$\Omnh$. Thus $\Omnh$ does \emph{not} factor to a two form on
quotients of the type $J_H^{-1}(\lam)/H_{\lam}$. Therefore, we need to
change $\Omnh$ in a certain way.

According to Theorem~\ref{thm:trunc} we have to find a connection $\sigma$ on the
principal bundle $TS\toto (TS)/H$ such that $\Xnh$ is horizontal. This means that
$\chi(\Xnh)=\Xnh$ where $\chi: T(TS)\to T(TS)$ is the associated
horizontal projection. Let us also trivialize
\[
 T(TS) = T(S\times\so(n)) = TS\times T\so(n)
 = S\times\so(n) \times \so(n)\times\so(n)
\]
via left-multiplication.
Then $\sigma$ has to be of the form
\[
 \sigma
 =
 \big(
  \sum(\rho^{\alpha} + f_a^{\alpha}\eta^a)\otimes\xi_{\alpha} , 0
 \big)
\]
where the $f_a^{\alpha} = f_a^{\alpha}(s,u)$ are unknown functions.
Thus
\[
 \chi
 =
 \big(
  -\sum f_a^{\alpha}\eta^a \otimes\xi_{\alpha} + \sum\eta^a\otimes\xi_a , \id_{\mathfrak{so}(n)}
 \big).
\]
The condition that $\Xnh$ be horizontal becomes
\begin{equation}\label{e:def-equ}
 l_{\alpha} = -\sum f_a^{\alpha}g_a.
\end{equation}
In accordance with Proposition~\ref{P:existence-of-connct}
this is solvable on the complement
of the set
\[
 \mathcal{E}
 = (\Xnh)^{-1}(\ho\times\set{0})
 = \set{g_a = 0}
 \subset TS.
\]
However, for convenience of exposition we restrict to the somewhat smaller set
$\mathcal{U}' := \set{(s,u): g_a(s,u) \neq 0 }\subset\mathcal{E}^c$.
One particular choice for $\chi$ that solves equation~(\ref{e:def-equ}) is
\[
 \chi
 =
 \big(
   \by{1}{n-1}\sum \by{l_{\alpha}}{g_a}\eta^a \otimes\xi_{\alpha} + \sum\eta^a\otimes\xi_a , \id_{\mathfrak{so}(n)}
 \big).
\]
(In fact, since $\chi$ has to be $H$-equivariant one does not have so much freedom here.
Choosing
$f_a^{\alpha} = -\by{l_{\alpha}}{g_a}\delta_{1a}$ solves (\ref{e:def-equ}) but
does not
yield an equivariant $\chi$, for example.)

Our strategy will now be to truncate $\vv<J,K>$ using $\chi$. This
truncation will be well-defined
on $\mathcal{U}'$ only. However, it will be obvious how to extend the result
to a two-form on the whole space.

When $n=3$ the truncated form is especially easy to compute. (For notational
reasons we make the convention that
$a=1,2$ and $\alpha=3$ whence the basis receives the appelation $Z_1,Z_2,Y_3$.)
Indeed,
\[
 \vv<J,K>(\chi\zeta_1,\chi\zeta_2)
 = -\sum g_a c^a_{\alpha b}\rho^{\alpha}\wedge\eta^b(\chi\zeta_1,\chi\zeta_2)
 = l_3
 = \by{1}{2}\sum c^{\alpha}_{ab}l_{\alpha}\eta^a\wedge\eta^b(\zeta_1,\zeta_2).
\]
Thus we can replace $\vv<J,K>$ with the semi-basic two form
\[
 \wt{\vv<J,K>} := \by{1}{2}\sum c^{\alpha}_{ab}l_{\alpha}\eta^a\wedge\eta^b
\]
which is obviously
well-defined on the whole space and also makes sense for $n>3$.

However, notice that $\wt{\vv<J,K>} \neq \vv<J,K>\circ\Lam^2\chi$ for $n>3$.
They agree only on a set of measure zero and the truncated form is not
defined on the whole space.
The point is that the contraction with
$\Xnh$ does not see this difference whence we may use $\wt{\vv<J,K>}$.

Now we notice that $\wt{\vv<J,K>}$ can be written in terms of
well known geometric objects.
Namely, let
\[
 \wt{\Om} :=
 \Om^S - \wt{\vv<J,K>}
 =
 \Om^S - \vv<L,\curv^{\om}>
\]
where $L = \sum l_{\alpha}Y_{\alpha}$ and
$\curv^{\om}\in\Om^2(S,\ho)$ is the curvature of the
standard $H$-connection
$\om
 =
 \sum \rho^{\alpha}\otimes Y_{\alpha}$.
Thus $L$ is the one-form $\om$ viewed as a function
$TS\to\ho$.

\begin{theorem}\label{prop:trunc}
The system
$(TS,\wt{\Om},\Hamc)$ has the following properties.
\begin{enumerate}[\up (1)]
\item
$\wt{\Om}$ is almost symplectic and $H$-invariant;
\item
$i(\Xnh)\wt{\Om} = d\Hamc$;
\item
$J_H$ is a \momap of the $H$-action
on $(TS,\wt{\Om})$.
\end{enumerate}
\end{theorem}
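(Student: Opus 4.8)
\emph{Strategy.} This is the concrete incarnation, for the $n$-dimensional Chaplygin ball, of the abstract truncation of Theorem~\ref{thm:trunc}, carried out with the auxiliary connection $\sigma$ produced in Section~\ref{S:trunc-ball} (cf.\ Proposition~\ref{P:existence-of-connct}). The only reason we cannot simply invoke Theorem~\ref{thm:trunc} is that, for $n>3$, the globally defined semi-basic two-form $\wt{\vv<J,K>}=\vv<L,\curv^{\om}>$ does \emph{not} coincide with $\vv<J,K>\circ\Lam^2\chi$ --- they agree only on the negligible set discussed above. So the plan is to check the three assertions directly, the moral being that the discrepancy between $\wt{\vv<J,K>}$ and the honest $\chi$-truncation is invisible both to $\Xnh$ and to every infinitesimal generator $\zeta_Y$, $Y\in\ho$; this is where the orthogonal symmetric pair structure $\so(n)=\ho\oplus\ho^{\bot}$, with $[\ho,\ho^{\bot}]\subseteq\ho^{\bot}$ and $[\ho^{\bot},\ho^{\bot}]\subseteq\ho$, enters.

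\emph{Property (1).} In the coframe $\rho^{\alpha},\eta^a$ the term $\wt{\vv<J,K>}=\frac12\sum c^{\alpha}_{ab}l_{\alpha}\,\eta^a\wedge\eta^b$ carries no differential of a fibre (``momentum'') coordinate, i.e.\ it is semi-basic on $TS=T^*S$; hence $\wt{\Om}=\Om^S-\wt{\vv<J,K>}$ is the canonical symplectic form perturbed by a semi-basic two-form and is therefore non-degenerate, so it is almost symplectic (and in general not closed). For $H$-invariance, $\om=\sum\rho^{\alpha}\otimes Y_{\alpha}$ is a principal connection on $\SO(n)\toto\SO(n)/H=S^{n-1}$, so $\curv^{\om}$ is $\Ad(H)$-equivariant, while $L\colon TS\to\ho$ is $\Ad(H)$-equivariant by construction; as the Killing form is $\Ad$-invariant and $\Om^S$ is invariant under the cotangent lift, $\wt{\Om}$ is $H$-invariant. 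This is the argument already used in Proposition~\ref{P:compressioninternalsymmetries}(2).

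\emph{Property (2).} By construction $\Xnh$ satisfies $i(\Xnh)\Omnh=d\Hamc$, and from \eqref{e:Omnh} we have $\Omnh=\Om^S-\vv<J,K>$ with $\vv<J,K>=-\sum g_a c^a_{\beta b}\,\rho^{\beta}\wedge\eta^b$. Since $\wt{\Om}-\Omnh=\vv<J,K>-\wt{\vv<J,K>}$, it suffices to prove $i(\Xnh)\vv<J,K>=i(\Xnh)\wt{\vv<J,K>}$ as one-forms on $TS$. Both two-forms are semi-basic, so the contraction only involves $s'=T\tau_S\cdot\Xnh=\sum l_{\alpha}\xi_{\alpha}+\sum g_a\zeta_a$ from \eqref{e:s'}; substituting $\rho^{\beta}(s')=l_{\beta}$ and $\eta^a(s')=g_a$ one finds that the $\rho^{\beta}$-component of $i(\Xnh)\vv<J,K>$ is $\sum_{a,b}c^a_{\beta b}g_a g_b=0$ because $c^a_{\beta b}=-c^b_{\beta a}$, while the remaining $\eta^b$-components of the two sides coincide because $c^{\alpha}_{ab}=-c^a_{\alpha b}$; both identities are instances of $\Ad$-invariance of the Killing form on the above symmetric pair. (The case $n=3$ is exactly the explicit calculation recorded before the theorem.) Hence $i(\Xnh)\wt{\Om}=i(\Xnh)\Omnh=d\Hamc$.

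\emph{Property (3) and the main obstacle.} For $Y\in\ho$ we must show $i(\zeta_Y)\wt{\Om}=d\vv<J_H,Y>$. Since $J_H(s,u)=\sum\wt{l}_{\alpha}(s,u)Y_{\alpha}$ is the standard \momap of the cotangent lifted $H$-action on $(TS,\Om^S)$ (Proposition~\ref{P:compressioninternalsymmetries}(4)), one has $i(\zeta_Y)\Om^S=d\vv<J_H,Y>$, so it remains to check $i(\zeta_Y)\wt{\vv<J,K>}=0$. Writing $\wt{\vv<J,K>}=\vv<L,\tau_S^*\curv^{\om}>$ and using $T\tau_S\cdot\zeta_Y=\zeta^S_Y$ (the generator of the $H$-action on $S$), we get $i(\zeta_Y)\wt{\vv<J,K>}=\vv<L,i(\zeta^S_Y)\curv^{\om}>=0$ because the curvature two-form of the principal connection $\om$ is horizontal. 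Thus $i(\zeta_Y)\wt{\Om}=d\vv<J_H,Y>$, i.e.\ $J_H$ is a \momap for $\wt{\Om}$. The one genuinely delicate point in the whole argument is precisely the passage from the $\chi$-dependent truncated form (defined only on $\mathcal{U}'$, and only honestly giving the dynamics there by Theorem~\ref{thm:trunc}) to the clean global form $\vv<L,\curv^{\om}>$: Theorem~\ref{thm:trunc} does not cover this step, and Property~(2) above is exactly the verification that this replacement alters neither the dynamics nor the \momap.
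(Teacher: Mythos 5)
Your proposal is correct and takes essentially the same route as the paper: all three properties are checked directly, with the heart of the matter being the identity $i(\Xnh)\vv<J,K>=i(\Xnh)\vv<L,\curv^{\om}>$ verified via the structure-constant relation $c^{\alpha}_{ab}=-c^a_{\alpha b}$ of the symmetric pair, and horizontality of $\curv^{\om}$ giving the \momap statement. The only (harmless) divergence is that you kill the $\rho^{\beta}$-components of $i(\Xnh)\vv<J,K>$ by the antisymmetry $c^a_{\beta b}=-c^b_{\beta a}$ directly, where the paper instead invokes $\vv<J,K>(\Xnh,\xi_{\alpha})=0$ from the proof of Theorem~\ref{thm:trunc} (i.e.\ from $H$-invariance of $\Hamc$ and conservation of $J_H$).
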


\begin{proof} Clearly $\wt{\Om}$ is non-degenerate, and the
term $\vv<L,\curv^{\om}>$ is $H$-invariant because
ambiguities in the pairing cancel out.
The third assertion is also obvious since
\[
 i(\xi_{\alpha})\wt{\Om}
  = i(\xi_{\alpha})\Om^S
  = \vv<dJ_H,Y_{\alpha}>.
\]
Thus it remains to show that
\[
 i(\Xnh)\vv<J,K> = i(\Xnh)\vv<L,\curv^{\om}>.
\]
Notice that $\vv<J,K>(\Xnh,\xi_{\alpha}) = 0$ by the
proof of Theorem~\ref{thm:trunc}.  Equating on $\om$-horizontal
vector fields and using formula (\ref{e:s'}) for $T\tau.\Xnh$ yields
\[
 \vv<J,K>(\Xnh,\zeta_c)
 = -\sum g_a c^a_{\alpha b}\rho^{\alpha}\wedge\eta^b(\Xnh,\zeta_c)
 = -\sum c^a_{\alpha c}g_a l_{\alpha}.
\]
On the other hand:
\[
 \vv<L,\curv^{\om}>(\Xnh,\zeta_c)
 = \sum_{a<b}l_{\alpha}c^{\alpha}_{ab}\eta^a\wedge\eta^b(\Xnh,\zeta_c)
 = \sum c^{\alpha}_{ac} g_a l_{\alpha}
\]
where we have used that
\begin{align*}
\curv^{\om}
 &=
  d\om-\by{1}{2}[\om,\om]
  =
  \sum d\rho^{\alpha}Y_{\alpha}
    - \sum_{\alpha<\beta}\rho^{\alpha}\wedge\rho^{\beta}c^{\gamma}_{\alpha\beta}Y_{\gamma}\\
 &=
 \sum_{\beta<\gamma,b<c}(c^{\alpha}_{\beta\gamma}\rho^{\beta}\wedge\rho^{\gamma}
    + c^{\alpha}_{bc}\eta^b\wedge\eta^c)Y_{\alpha}
    - \sum_{\alpha<\beta}\rho^{\alpha}\wedge\rho^{\beta}c^{\gamma}_{\alpha\beta}Y_{\gamma}\\
 &=
 \sum_{b<c}c^{\alpha}_{bc}\eta^b\wedge\eta^c Y_{\alpha}
\end{align*}
which follows from formulas (\ref{e:drho}).
\end{proof}

The theorem thus provides a particular choice of a
truncating two-form. When $n=3$ this is the only possible
choice.
Indeed, this is so because a two-form in three dimensions is already fixed by
specifying its contractions (to one-forms) with respect to
two transversal vector fields. The
two vector fields are $\Xnh$ and the infinitesimal generator
of the $H$-action. Of course, one is really only interested
in the point-wise tangent projections of these vector fields.
Indeed, to tie this to \cite{N08} notice that the two-form
$-i(\Xnh)\nu$ defined in \cite{N08} is just
$\vv<J,K>-\vv<L,\curv^{\om}>$; the form
$\nu=\rho^1\wedge\eta^1\wedge\eta^2$
is the standard volume form on $S=\SO(3)$.

In higher dimensions, however, there will be many
different possibilities, and it is not clear whether these
are all on an equal footing. For example, are there choices
which yield a form $\wt{\Om}$ which becomes (conformally)
closed after restriction to a level set of $J_H$ while this
is not true for other choices?

The existence of $\wt{\Om}$ in the above proposition allows to
replace the triple\break $(TS,\Omnh,\Ham)$ with the triple
$(TS,\wt{\Om},\Ham)$. This leaves the dynamics unaltered but has the
advantage that the conserved quantity $J_H$ is now the \momap
associated to the $H$-symmetry. We can thus do (almost) Hamiltonian
reduction and pass to the quotient $J_H^{-1}(\orb)/H$ where
$\orb\subset\ho^*$ is a coadjoint orbit.

\begin{corollary}[The ultimate reduced phase space]\label{cor:4.2}
Let $\orb\subset\ho^*$ be a coadjoint orbit. Then
\[
 J_H^{-1}(\orb)/H
 \cong
 TS^{n-1}\times_{S^{n-1}}(S\times_H\orb)
\]
where the isomorphism depends on the
mechanical connection on $S\toto S/H$ associated to the
metric $\mu_0$. In particular, $J_H^{-1}(\orb)/H$ is isomorphic to a
bundle over $TS^{n-1}$ with fiber $\orb$.
\end{corollary}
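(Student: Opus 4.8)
The plan is to recognize the asserted identity as a special case of the classical cotangent bundle reduction theorem at a coadjoint orbit, in its ``bundle'' formulation, applied to the principal $H$-bundle $S\toto S/H=S^{n-1}$ and the canonically lifted $H$-action on $T^*S$ with its standard momentum map $J_H$.

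First I would assemble the structure of the $H$-action. The group $H=\SO(n-1)$ acts freely and properly on $S=\SO(n)$ by left translation, so $S\toto S/H=S^{n-1}$ is a principal $H$-bundle. Its cotangent lift to $T^*S=TS$ is again free and proper (freeness because the base action is free, properness because $H$ is compact). By Proposition~\ref{P:compressioninternalsymmetries}(4) together with the explicit formula $J_H(s,u)=\sum\wt{l}_\alpha(s,u)Y_\alpha$ from Section~\ref{S:Chaplygin}, the map $J_H\colon TS=T^*S\to\ho^*$ is precisely the standard equivariant momentum map of this cotangent-lifted action relative to $\Om^S$. Since $\orb$ is an $H$-invariant subset and the action is free and proper, $J_H^{-1}(\orb)$ is a smooth $H$-invariant submanifold and $J_H^{-1}(\orb)/H$ is a smooth manifold.

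Next I would invoke the orbit version of cotangent bundle reduction. On $S\toto S^{n-1}$ take the mechanical connection $\mathcal{A}_{\mathrm{mech}}$ of $\mu_0$, whose horizontal distribution is the $\mu_0$-orthogonal complement of the $H$-orbit directions and which is $H$-invariant. The theorem, via the shifting construction attached to $\mathcal{A}_{\mathrm{mech}}$, provides a diffeomorphism
\[
 J_H^{-1}(\orb)/H\;\cong\;T^*(S/H)\times_{S/H}\bigl(S\times_H\orb\bigr),
\]
where $S\times_H\orb\to S/H$ is the associated coadjoint orbit bundle, and the dependence of this diffeomorphism on $\mathcal{A}_{\mathrm{mech}}$ is exactly the connection dependence intrinsic to the shifting construction. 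Since $\mu_0$ is $H$-invariant it descends to a metric $\bar\mu_0$ on $S^{n-1}=S/H$; identifying $T^*S^{n-1}=TS^{n-1}$ by means of $\bar\mu_0$ turns the right-hand side into $TS^{n-1}\times_{S^{n-1}}(S\times_H\orb)$, which is the asserted isomorphism.

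The ``in particular'' then follows formally: $TS^{n-1}\times_{S^{n-1}}(S\times_H\orb)$ is by definition the pullback of the locally trivial bundle $S\times_H\orb\to S^{n-1}$, of typical fibre $\orb$, along the projection $\tau_{S^{n-1}}\colon TS^{n-1}\to S^{n-1}$, hence a locally trivial fibre bundle over $TS^{n-1}$ with fibre $\orb$. I do not expect a genuine obstacle here, since the statement is a specialization of a standard theorem; the two points that need care are verifying that the hypotheses of cotangent bundle reduction are met --- in particular that $J_H$ is the \emph{canonical-form} momentum map and not merely a conserved quantity, which is where Proposition~\ref{P:compressioninternalsymmetries}(4) enters --- and bookkeeping the $\mu_0$- and $\bar\mu_0$-identifications of tangent and cotangent bundles so that the conclusion is phrased with $TS^{n-1}$ in place of $T^*S^{n-1}$.
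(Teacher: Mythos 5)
Your proposal is correct and follows essentially the same route as the paper, whose proof is a one-line appeal to ``the usual argument involving the mechanical connection and the locked inertia tensor associated to $\mu_0$'' --- that is, exactly the orbit version of cotangent bundle reduction via the shifting construction that you invoke. Your write-up merely makes explicit the hypothesis checks (freeness, properness, $J_H$ being the canonical momentum map) and the $\mu_0$-identification of $T^*S^{n-1}$ with $TS^{n-1}$ that the paper leaves implicit.
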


\begin{proof}
This follows from the usual argument involving the mechanical connection
and the locked inertia tensor associated to $\mu_0$.
\end{proof}

Let $\lam\in\orb$.
Since $H\times_{H_{\lam}} J_H^{-1}(\lam) \cong J^{-1}_H(\orb)$
where $H_{\lam}$ is the stabilizer subgroup at $\lam$
we can also do point reduction to arrive at the same reduced space,
that is, $J_H^{-1}(\lam)/H_{\lam} = J_{H}^{-1}(\orb)/H$.
This implies the following corollary.

\begin{corollary}\label{cor:hom}
When
$\mathbb{I}=1$ Chaplygin's ball is Hamiltonian after
reduction by $H$.
\end{corollary}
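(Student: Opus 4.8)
The plan is to exploit the fact that, in the homogeneous case, the truncated form $\wt\Om$ of Theorem~\ref{prop:trunc} becomes the canonical form of $T^*S$ minimally coupled through the very connection used to describe the reduced space, so that its $H$-reduction is an ordinary cotangent bundle reduction. First I would record the algebraic consequence of $\mathbb{I}=1$: then $\wt{l}_\alpha=l_\alpha$, so the map $L=\sum_\alpha l_\alpha Y_\alpha\colon TS\to\ho$ coincides with the \momap $J_H=\sum_\alpha\wt{l}_\alpha Y_\alpha$. Hence by Theorem~\ref{prop:trunc} the truncated form is
\[
 \wt\Om \;=\; \Om^S-\vv<J_H,\curv^\om>,
\]
with $\curv^\om$ the curvature of the Hopf connection $\om$ on $\SO(n)\toto\SO(n)/H=S^{n-1}$; and, again because $\mathbb{I}=1$, this $\om$ is precisely the mechanical connection of $\mu_0$ on $\SO(n)\to S^{n-1}$. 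By Theorem~\ref{prop:trunc}(2),(3) the Chaplygin dynamics is the almost Hamiltonian system $(TS,\wt\Om,\Hamc)$ with $J_H$ as its \momap.

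Next I would carry out almost Hamiltonian reduction (\cite{PlanasBielsa2004}) by $H$. Since $J_H$ is the \momap of $\wt\Om$, the form $\wt\Om$, the function $\Hamc$ and the field $\Xnh$ descend to the reduced space $J_H^{-1}(\orb)/H$, which by Corollary~\ref{cor:4.2} is the bundle $TS^{n-1}\times_{S^{n-1}}(S\times_H\orb)$, giving reduced data $(\wt\Om_\orb,\bar\Hamc,\bar\Xnh)$ with $i(\bar\Xnh)\wt\Om_\orb=d\bar\Hamc$.

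The crux is that $\wt\Om_\orb$ is genuinely \emph{closed}. Because the connection $\om$ entering $\wt\Om$ is the one underlying the identification of $J_H^{-1}(\orb)/H$ in Corollary~\ref{cor:4.2}, the reduction of $\wt\Om=\Om^S-\vv<J_H,\curv^\om>$ is a cotangent bundle (Sternberg) reduction: $\wt\Om_\orb$ is the sum of the canonical form on $T^*S^{n-1}$, the fibre-wise Kirillov--Kostant--Souriau form on $S\times_H\orb$, and a magnetic two-form pulled back from $S^{n-1}$ and built from $\curv^\om$, each of which is closed. (For $n=3$ this is especially transparent: $H=S^1$, $\orb=\set{\lam}$ is a point, the reduced space is $T^*S^2$, and $\wt\Om_\orb=\Om^{S^2}+\pi^*\beta_\lam$ is a magnetic modification of the canonical form with $\beta_\lam$ a multiple of the area form of $S^2$ --- the symplecto-geometric face of the Borisov--Mamaev bracket~\cite{BorisovMamaev}.) Consequently $(J_H^{-1}(\orb)/H,\wt\Om_\orb,\bar\Hamc)$ is a bona fide Hamiltonian system whose Hamiltonian vector field is $\bar\Xnh$; that is, the reduced homogeneous Chaplygin ball is Hamiltonian, with no time reparametrization needed.

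The step I expect to require the most care is precisely this closedness --- that the truncation term $-\vv<J_H,\curv^\om>$ assembles with the reduced canonical form into a closed, not merely almost symplectic, two-form. Concretely one checks that $d\wt\Om$ pulls back to zero on $J_H^{-1}(\orb)$: the contribution $\vv<dJ_H\wedge\curv^\om>$ vanishes because $dJ_H$ kills vectors tangent to the level set, while the contribution $\vv<J_H,d\curv^\om>$ is controlled by the Bianchi identity for $\curv^\om$ together with $\ads_Y\lam=0$ for $Y\in\ho_\lam$ --- which is exactly the cancellation that makes cotangent bundle reduction work. For $\mathbb{I}\neq1$ the same computation instead leaves uncancelled $\vv<dL\wedge\curv^\om>$-terms, and that is precisely why, in the inhomogeneous case, Hamiltonization is achieved only after a time reparametrization (Proposition~\ref{prop:ham}).
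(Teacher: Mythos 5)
Your proposal is correct and follows essentially the same route as the paper: the paper's entire proof is the observation that $\mathbb{I}=1$ forces $L=J_H$, so that $\wt{\Om}=\Om^S-\vv<J_H,\curv^{\om}>$, together with the remark that closedness of the reduced form then follows from the Bianchi identity for $\curv^{\om}$. Your additional material (the explicit passage to $J_H^{-1}(\orb)/H$ and the Sternberg/minimal-coupling reading of the reduced form) is an expansion of that two-line argument rather than a different method.
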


\begin{proof}
In this case $L=J_H$ and closedness follows
from the Bianchi identity for the curvature form.
\end{proof}

We stress that truncation is necessary even in the
homogeneous case.
This is due to the fact that $\D$ is never the horizontal space of the
mechanical connection associated to $\mu$.
Once the non-holonomic two-from $\Omnh$ has been
altered one can perform reduction and it is only then that
the system becomes Hamiltonian.  This should be compared
with \cite[Section~3.3]{EhlersKoiller}. See also the remarks in Section~\ref{S:concl}.

\subsection{Hamiltonization of the $3$-dimensional ball}
Let
$n=3$. Consider the metric isomorphism $\Phi{} :=
(\mu_0)^{\check{}} = \mathbb{I}+\A^*\A: TS\to T^*S =_{\langle .,. \rangle} TS$,
$(s,u)\mapsto \mathbb{I}u + \sum
g_a(s,u)\Ad(s^{-1})Z_a$.
Define
\[
 f(s) = (\det \Phi{}_s)^{-\frac{1}{2}} \textup{ where } s\in S.
\]
Because of $H$-invariance $f$ drops to a function
$S^2\to\R$.\footnote{This function was called $\rho_{\mu}$ in
  \cite[Section~3]{BorisovMamaev2005} and has also been considered in
  \cite{FK95} in the context of higher dimensional Chaplygin systems.}

\begin{proposition}[Hamiltonization]\label{prop:ham}
Let
$\lam\in\ho^*\cong\R$. Then $d(f\wt{\Om})|_{J_H^{-1}(\lam)}=0$.
\end{proposition}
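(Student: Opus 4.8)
The plan is to compute $d(f\wt{\Om})$ explicitly in the adapted coframe and show that its restriction to a level set $J_H^{-1}(\lambda)$ vanishes, using the $n=3$ special structure. Recall that for $n=3$ we have one $Y$-direction (index $\alpha=3$) and two $Z$-directions (indices $a=1,2$), and $\wt{\Om} = \Om^S - \frac{1}{2}\sum c^{\alpha}_{ab}l_{\alpha}\eta^a\wedge\eta^b = \Om^S - \vv<L,\curv^\om>$. First I would write out $\Om^S$ from \eqref{e:Om^S} as $-d\eta$ for the Liouville-type one-form $\eta = \wt{l}_3\rho^3 + \sum(\wt{g}_a+g_a)\eta^a$, so that $\wt{\Om}$ is an explicit combination of the basic forms $\rho^3,\eta^1,\eta^2$ on $S$ and the differentials $dl_\alpha, d\wt l_\alpha, dg_a, d\wt g_a$ of the fiber coordinates, using the structure equations \eqref{e:drho} throughout. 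Then $d(f\wt\Om) = df\wedge\wt\Om + f\,d\wt\Om$, and since $f$ is a function on the base $S$ (indeed on $S^2$), $df$ is a combination of $\rho^3,\eta^1,\eta^2$ only.

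Next I would use that $J_H = \wt l_3$ (up to the identification $\ho^*\cong\R$), so $J_H^{-1}(\lambda) = \{\wt l_3 = \lambda\}$ and restriction to this set means setting $\wt l_3 = \lambda$ (constant) and $d\wt l_3 = 0$. The key computational observation is that the metric isomorphism $\Phi = \mathbb I + \A^*\A$ relates $(\wt l_3, \wt g_1 + g_1, \wt g_2 + g_2)$ — the momentum-type coordinates appearing in $\eta$ — to $(l_3, g_1, g_2)$ — the velocity-type coordinates appearing in $\curv^\om$ and in the vector field via \eqref{e:s'} — precisely through the matrix whose determinant defines $f^{-2}$. So after setting $\wt l_3 = \lambda$, the three-form $d(f\wt\Om)$ restricted to $J_H^{-1}(\lambda)$ should be expressible purely in terms of $\rho^3,\eta^1,\eta^2$ and $dg_1,dg_2$ (the two remaining fiber coordinates), i.e. it lives in a five-dimensional space, and I would show that the coefficient of each surviving monomial vanishes. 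This is where Chaplygin's reducing-multiplier phenomenon enters: the specific choice $f = (\det\Phi)^{-1/2}$ is exactly what makes the "magnetic" correction $d(f\vv<L,\curv^\om>)$ cancel against the correction coming from $df\wedge\Om^S$ on the constraint surface.

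The main obstacle I anticipate is bookkeeping: one must carefully track how $d\wt l_3$, $d(\wt g_a + g_a)$ and $dl_\alpha$, $dg_a$ are related through $d\Phi$ on $S$, since $\Phi$ depends on $s$ through $\A_s$, and then verify that after imposing $\wt l_3 = \lambda$ the logarithmic-derivative identity $d\log f = -\tfrac12 d\log\det\Phi$ produces exactly the terms needed to kill $f\,d\vv<L,\curv^\om>$. It will help to work on $\mathcal U' = \{g_a \neq 0\}$ where everything is smooth, compute there, and note the result extends by continuity; and to exploit $H$-invariance to reduce the number of independent monomials. I expect the closedness to reduce, after restriction, to a single scalar PDE for $f$ on $S^2$ which $(\det\Phi)^{-1/2}$ solves — essentially re-deriving Borisov–Mamaev's reducing multiplier in this geometric language, as promised in the introduction.
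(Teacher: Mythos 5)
Your plan is correct and follows essentially the same route as the paper's proof: a direct coframe computation of $d(f\wt{\Om})$ via the structure equations, the logarithmic derivative $df=-\by{1}{2}f\sum\tr(\Phi^{-1}\zeta_a.\Phi)\,\eta^a$ (available because $\xi_3.\Phi=0$ by $H$-invariance), and restriction to $\{\wt{l}_3=\lam\}$. The only organizational difference is that the paper groups the terms as $d\big(df\wedge\theta^S-f\vv<L,\curv^{\om}>\big)$ and reduces everything to the single identity $N_1(\wt{g}_2+g_2)-N_2(\wt{g}_1+g_1)=-2l_3-2\vv<\Phi^{-1}\Ad(s^{-1})Y_3,\Ad(s^{-1})Y_3>\wt{l}_3$, after which the leftover terms are not cancelled but are individually closed on the level set (being $\lam$ times basic, $H$-invariant forms) --- this is the one computation your outline defers.
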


\begin{proof} Let $\iota: J_H^{-1}(\lam)\hookto TS$ be the
inclusion.  Notice that
\begin{align*}
  &\iota^*d(f\wt{\Om})
     = \iota^*(df\wedge\Om^S - df\wedge\vv<L,\curv^{\om}>
       - fd\vv<L,\curv^{\om}>) = 0\\
  &\iff
    df\wedge\theta^S - f\vv<L,\curv^{\om}>
  \textup{ is closed on } J_H^{-1}(\lam).
\end{align*}
Since $\mu_0$ is $H$-invariant it follows that
$\xi_{\alpha}.\Phi{} = 0$ and using the
derivation property of the determinant function we find that
\[
 df
 = -\by{1}{2}(\det \Phi{})^{-\by{3}{2}}\det(\Phi{})
      \sum\tr(\Phi{}^{-1}\zeta_a.\Phi{}) \eta^a
 = -\by{1}{2}f\sum \tr(\Phi{}^{-1}\zeta_a.\Phi{}) \eta^a.
\]
Computing the trace with respect to the orthonormal basis
$\Ad(s^{-1})Y_{\alpha},\Ad(s^{-1})Z_a$
gives
\[
 N_a
 := \tr(\Phi{}^{-1}\zeta_a.\Phi{})
  = -2\sum\vv<\Phi{}^{-1}c^{\alpha}_{ab}\Ad(s^{-1})Y_{\alpha},\Ad(s^{-1})Z_b>.
\]
Actually $\alpha=1$ and $a = 1,2$ because $n=3$. However,
for notational reasons we will make the convention that
$\alpha = 3$. The basis of $\so(3)$ is thus called
$Z_1,Z_2,Y_3$.  Therefore,
\begin{align*}
 &df\wedge\theta^S - f\vv<L,\curv^{\om}>\\
 =&
  f\big(-\by{1}{2}\sum N_a\eta^a\wedge(\wt{l}_3\rho^3
        + (\wt{g}_b+g_b)\eta^b)
    - l_3\eta^1\wedge\eta^2\big)\\
 =& -f\big(\by{1}{2}\sum N_a\wt{l}_3\eta^a\wedge\rho^3
    + (\by{1}{2}(N_1(\wt{g}_2+g_2)-N_2(\wt{g}_1+g_1))
        + l_3)\eta^1\wedge\eta^2\big)
\end{align*}
Notice that that the first term in this expression, $-\by{1}{2}f\sum
N_a\eta^a\wedge\wt{l}_3\rho^3 = df\wedge\wt{l}_3\rho^3$,
becomes closed upon
restriction to a level set of $J_H = \wt{l}_3 Y_3$.
For the middle term, a short calculation using that $n=3$ now shows that
\[
 N_1(\wt{g}_2+g_2)-N_2(\wt{g}_1+g_1)
 =
 -2l_3 - 2\vv<\Phi{}^{-1}\Ad(s^{-1})Y_3,\Ad(s^{-1})Y_3>\wt{l}_3.
\]
Therefore,
\[
 f( \by{1}{2}(N_1(\wt{g}_2+g_2) - N_2(\wt{g}_1+g_1))
   +
   l_3 )\eta^1\wedge\eta^2
 =
  -f\vv<\Phi^{-1}\Ad(s^{-1})Y_3,\Ad(s^{-1})Y_3>\wt{l}_3\eta^1\wedge\eta^2
\]
which is also closed when restricted to a level set of $J_H
= \wt{l}_3 Y_3$.
\end{proof}

This approach gives a symplecto-geometric explanation
of the formulas in \cite{BorisovMamaev,BorisovMamaev2005}.
Note in particular that the proof involves rather little computation.

Unfortunately the above proof relies very heavily on the
fact that $n=3$. However, it is designed so that, in principle,
all the
expressions also make sense in higher dimensions. It is
hoped that this approach can also be useful in studying
cases of Hamiltonization in dimensions $n>3$.
Indeed, it would be very nice if these techniques could be used to
give a useful characterization of those inertia
matrices $\mathbb{I}$ and values of $J_H$ which yield a
system that is Hamiltonizable after reduction by $H$.

\section{Comments and conclusions}\label{S:concl}

One of the goals of this paper was to work out the reduction of
$G$-Chaplygin systems with respect to additional internal
symmetries modeled by a Lie group $H$ subject to the compatibility
conditions
described in Section~\ref{S:InternalSym}.
The first step was to describe the compression to an almost
Hamiltonian system $(T^*S,\Omnh,\Hamc)$ in the presence of internal
symmetries. A construction that is similar to this step can also be
found in \cite{S02,S98}. The novelty in the
truncation procedure is that we can reduce the dynamics of the system
to a coadjoint bundle over $T^*B=T^*(S/H)$ and \emph{reproduce} the
structure of an almost Hamiltonian system. This gives a general answer
to a question posed for the special case of the $3$-dimensional
Chaplygin ball problem in \cite[Section~4.1]{EKR03}.

The main technical step in our reduction procedure is called
\emph{truncation}.
This involves a choice of a principal bundle connection $\sigma$ on
$T^*S\toto(T^*S)/H$ such that the non-holonomic vector field $\Xnh$ is
horizontal.
The name is chosen because, effectively, we use
the connection $\sigma$ to cut off all the information contained
in the $\vv<J,K>$-term that is not seen by the dynamics but presents
an obstruction to reduction.

In Section~\ref{S:Chaplygin} we apply this reduction procedure to the
$n$-dimensional Chaplygin ball problem. Thus
we write the system as an almost Hamiltonian system on a coadjoint
bundle over $T^*(S^{n-1})$.
In particular
we derive a symplectic
proof of the
remarkable result of \cite{BorisovMamaev,BorisovMamaev2005} on the
Hamiltonizability of the $3$-dimensional Chaplygin ball.

Furthermore, we can also deal with the $n$-dimensional homogeneous
Chaplygin ball. In Corollary~\ref{cor:hom} we show that this system
is Hamiltonian after reduction of internal symmetries (but not at the
compressed level).
From the mathematical point of view this is a non-trivial conclusion:
even in the homogeneous case the
connection $\D$ does not coincide with the mechanical connection
associated to $\mu$, whence one cannot employ usual symplectic
reduction techniques to construct the reduced phase space.
In fact, \cite{EhlersKoiller}
have shown (for $n=3$) that the problem is not even Hamiltonizable
(i.e., conformally symplectic) at the compressed
level. Thus one has to use truncation to eliminate the internal
symmetries, and it is only then that the system becomes Hamiltonian.
On the other hand, the result is obvious from a physical perspective:
Consider the big phase space $T^*Q=T^*(S\times\R^{n-1})$ and the
Hamiltonian $\Ham$ of the ball. Let $X_{\mathcal{H}}$ denote the
Hamiltonian vector field associated to $\Ham$ with respect to the
\emph{canonical symplectic} structure on $T^*Q$. This is the
homogeneous $n$-dimensional ball that rolls on a horizontal table
without constraints. If this ball happens to satisfy the no-slip
condition at \emph{one} time instant it will also have to satisfy
the constraints for all future and past time; it cannot accelerate and
will roll on a straight line. The point is that this physical
fact cannot be described in the framework of existing reduction theories:
either one does symplectic reduction of the free system but then one
cannot describe the constraints, i.e., the space $\D$ (which could be
viewed as necessary initial conditions), within this process; or one
does compression which captures the constraint space $\D$ but destroys
the Hamiltonian feature of the system. Hence the need for
truncation.
See Corollary~\ref{cor:hom}.

The  truncation of  $\Omnh$ is an example of a more general procedure
in which one consistently replaces the
almost Hamiltonian system $(T^*S,\Omnh,\Hamc)$  by
$(T^*S,\wt{\Om},\Hamc)$.
Even though
both systems define the same vector field on $T^*S$, there may
an advantage in working with $\wt{\Om}$. (For instance, one may be conformally
symplectic while the other is not.)
This is the idea of adding an \emph{affine term} to $\Omnh$ which
seems to go back to \cite{St85}, has been formalized in
\cite{EhlersKoiller}, and successfully used in \cite{N08}.
An affine term is a semi-basic two form on $T^*S$ which vanishes when
contracted with $\Xnh$.
The problem
is how to choose the affine term. In the special case of internal
symmetries the situation is easier as the symmetries provide extra
information.
Notice that in Section~\ref{S:trunc-ball} we used the truncation to
find our choice of affine term. In Theorem~\ref{prop:trunc}, however,
we did not
use the truncated two-form $\vv<J,K>\circ\Lam^2\chi$,
but rather another form that we found to be more convenient.
Thus it is
important to remember that one has many different possibilities here
and the truncation is just a means to find one particular choice.
More generally,
the idea of modifying $\Omnh$ seems to be
important also for systems without internal symmetries
(such as the rubber ball) but a
systematic treatment
is not known.
The Dirac reduction techniques (which do not use internal symmetries)
developed in \cite{JR08} could provide a  starting point, but it seems to us
that one encounters  the same difficulties as in compression.

The study of
other non-holonomic systems, including the rubber ball, with this
perspective is work in progress.

\textbf{Thanks.}
We would like to thank the referees for their detailed reports and
many constructive suggestions.

\medskip
Received November 2008; revised March 2009.
\medskip

\end{document}